\documentclass[onecolumn,authoryear]{els-mrw} 

\usepackage{amsmath,amssymb,amsfonts,amsthm,makeidx,graphicx}
\usepackage{txfonts}
\usepackage{helvet}

\def\Sec#1{Sec.~\ref{#1}}
\newcommand{\ud}{{\mathrm d}}
\renewcommand{\Re}{{\mathbb R}}

\renewcommand{\k}{{\mathsf K}}
\usepackage{caption}
\usepackage{subcaption}
\usepackage{tikz}
\usetikzlibrary{shapes,arrows,positioning,calc}
\usetikzlibrary{shapes.geometric,arrows}

\tikzstyle{input} = [coordinate]
\tikzstyle{output} = [coordinate]
\tikzstyle{sum} = [draw, circle]

\tikzstyle{startstop} = [rectangle, rounded corners, text centered, draw=black, fill=blue!10]
\tikzstyle{process} = [rectangle, minimum width=2cm, minimum height=1cm, text centered, draw=black, fill=orange!10]
\tikzstyle{decision} = [diamond, aspect=3, minimum width=3cm, minimum height=1cm, text centered, draw=black, fill=red!10]
\tikzstyle{compute} = [rectangle, minimum width=2cm, minimum height=1cm, text centered, draw=black, fill=green!10]
\tikzstyle{estimate} = [rectangle, rounded corners, minimum width=2cm, minimum height=1cm, text centered, draw=black, fill=yellow!10]
\tikzstyle{arrow} = [thick,->,>=stealth]

\begin{document}

\chapter{How to implement the Bayes' formula in the age of ML?}\label{chap1}

\author[1]{Amirhossein Taghvaei}%
\author[2]{Prashant G. Mehta}%

\address[1]{\orgname{University of Washington}, \orgdiv{Department of Aeronautics and Astronautics}, \orgaddress{Seattle, WA, USA}}
\address[2]{\orgname{University of Illinois at Urbana-Champaign}, \orgdiv{Department of Mechanical Engineering}, \orgaddress{Urbana, IL, USA}}

\articletag{}

\maketitle

\begin{glossary}[Keywords]
\begin{tabular}{@{}lp{34pc}@{}}
        Bayes formula \\
	Bayesian methods \\
        Control of probability distributions \\
        Estimation \\
        Feedback particle filter \\
	Kalman filter \\
        Nonlinear filtering \\
        Optimal transportation theory \\
	Posterior sampling \\
        Simulation-based algorithms \\
        Stochastic systems
\end{tabular}
\end{glossary}

\begin{glossary}[Key points/Objectives]
  \begin{itemize}
  \item A historical survey of the traditional filtering algorithms such as the Kalman filter and the particle filter.
\item A discussion of the fundamental limitations of these algorithms in non-Gaussian and high-dimensional applications.
  \item To overcome some of these limitations, a new optimization formulation of the Bayes' law is described, based on the optimal transportation (OT) theory.
\item The formulation is related to the feedback particle filter in continuous-time settings of the nonlinear filtering problem. 
\item The formulation combined with machine learning tools can lead to scalable algorithms in non-Gaussian and high-dimensional applications.  
  \end{itemize}
%

\end{glossary}

\begin{glossary}[Nomenclature]
\begin{tabular}{@{}lp{34pc}@{}}
	KF & Kalman Filter \\
	EnKF & Ensemble Kalman Filter \\
	PF & Particle Filter \\
        FPF & Feedback Particle Filter \\
	SIR & Sequential Importance (re)-sampling \\
	OT & Optimal Transportation \\
	SDE & Stochastic Differential Equation \\
	PDE & Partial Differential Equation
\end{tabular}
\end{glossary}

\begin{abstract}[Abstract]
This chapter contains a self-contained introduction to the significance of Bayes' formula in the context of nonlinear filtering problems.  Both discrete-time and continuous-time settings of the problem are considered in a unified manner. In control theory, the focus on optimization-based solution approaches is stressed together with a discussion of historical developments in this area (from 1960s onwards).  The heart of this chapter contains a presentation of a novel optimal transportation formulation for the Bayes formula (developed recently by the first author) and its relationship to some of the prior joint work (feedback particle filter) from the authors.  The presentation highlights how optimal transportation theory is leveraged to overcome some of the numerical challenges of implementing Bayes' law by enabling the use of machine learning (ML) tools. 
\end{abstract}

\newcommand{\tp}{^{\mathrm{T}}}

\section{Introduction}\label{sec:intro}

Recently we attended a stimulating week-long workshop in the scenic setting of Banff, Alberta.  While the workshop was not on the topic of nonlinear filtering, one of the attendees -- a giant in the field -- asked the following question:
\begin{quote}
  \centering
``How has nonlinear filtering changed compared to 1960s?'' 
\end{quote}
It is impossible to answer this question in a single article or even a single book. Our modest goal is to discuss some of the difficulties in this subject, describe some historical developments (in 1960s and 1990s) as well as some more recent advances, and offer some future perspectives for students and researchers.

In its simplest form, a nonlinear filter is a recursive application of the Bayes' formula. For jointly distributed random variables $(X,Y)$, the formula is given by
\begin{equation*}
	\text{(Bayes' formula)}\qquad  P_{X|Y}(x|y) = \frac{P_{Y|X}(y|x) P_{X}(x)}{P_Y(y)}
\end{equation*} 
In applications, $Y$ is observed (given by $Y=y$) and the goal is to compute the conditional probability $P_{X|Y}(\cdot|y)$ using the Bayes' formula.  While easily stated, it is notoriously hard to {\em implement} the Bayes formula.  This difficulty explains perhaps the pessimism implicit in the question heard at the Banff workshop.

\subsection{Implementing Bayes' formula in 1960s}

In the chapter~12 of the classical textbook~\cite{bryson1975applied} by Arthur Bryson and Larry Ho\footnote{Although the textbook is published in 1975, it is based on lecture notes for graduate courses offered at Harvard University during 1960s.}, the following model is introduced for the random variables $(X,Y)$:
\begin{equation}\label{eq:linear_model_intro}
\text{(model)}\qquad Y = H X + W
\end{equation}
where $H$ is a deterministic (known) matrix and $X$ and $W$ are independent Gaussian random vectors.  Suppose $X$ has mean $m$ and variance $\Sigma$ and $W$ is mean zero with variance $R$.  Random vector $W$ has the meaning of independent noise which is the reason for the mean zero assumption.  Suppose also that the variances $\Sigma$ and $R$ are positive-definite.  The problem is to compute the ``best estimate'' of $X$ given an observation $Y=y$.  The Bayes' formula provides an answer.  Instead of applying the Bayes' formula directly, Bryson and Ho introduce an optimization problem\footnote{The optimization problem is derived from the Bayes' formula via a maximum log-likelihood procedure.} as follows: 
\[
\text{(optimization problem)}\qquad J(x) = (x-m)\tp \Sigma^{-1} (x-m) + (y-Hx)\tp R^{-1}  (y-Hx) 
\]
Its minimizer is easily computed, e.g., by using either the completion-of-squares procedure or setting the derivative to zero (approach followed in~\cite[Ch.~12]{bryson1975applied}) to obtain
\[
\text{(Kalman filter update)}\qquad x^{\text{MIN}} = m + K (y- H m)
\]
where $K = \Sigma H\tp (H\Sigma H^\top + R )^{-1}$ is referred to as the Kalman gain.  This formula for the minimizer is at the heart of Kalman filter equation for linear Gaussian stochastic processes.  The formula is referred to as the (Bayesian) update formula for the Kalman filter.

In control theory, it is a preferred approach to implement the Bayes formula by posing and solving an optimization problem. For the model~\eqref{eq:linear_model_intro}, an alternate approach is based on minimizing the mean-squared-error (m.s.e.).  This m.s.e. minimization approach is followed in the seminal paper~\citep{kalman1960new}.  A side-by-side comparison of the two optimization cost functions and the respective solutions can be found in~\citep[Table 3.1]{kailath2000linear}.

\subsection{Aims of this article}

In this expository article, we have the following goals:
\begin{enumerate}
\item Define the problem of implementing the Bayes' formula in general settings (beyond the simple example above).
\item Describe the historical approaches, namely, importance sampling and ensemble Kalman filter, and discuss their limitations. 
\item Describe a novel formulation for the Bayes' formula as an optimization problem, specifically, as an optimal transport problem.
\item Relate this optimization problem to some of the more recent algorithms for nonlinear filtering, specifically, the feedback particle filter.
\item Offer some perspectives related to machine learning (ML).  Specifically,
\begin{enumerate}
\item How advances in ML (neural networks) informs the algorithmic development, and
\item How the proposed algorithms can be used for ML applications related to posterior sampling?
\end{enumerate}
\end{enumerate}

Before presenting the paper outline, we comment on centrality of optimization both in 1960s and in 2020s. While these problems used to be least squares type in 1960s~\citep{swerling1971modern}, optimal transportation provides the appropriate methodological framework in the current era. Several reasons for this are discussed as part of this paper.      

\subsection{Paper outline}

The outline of the remainder of this paper is as follows.  The nonlinear filtering problem and its particle filter solution is introduced in \Sec{sec:NF_PF}. The solution is helpful to introduce the problem of implementing the Bayes formula in \Sec{sec:IS_ENkF}.  Along with the problem, a historical survey of main solution approaches and their limitations is also described.  This is followed by~\Sec{sec:OT_Bayes} where the optimal transport (OT) formulation of the Bayes' formula is described.  This section also contains its relationship to the feedback particle filter (FPF).  In~\Sec{sec:OT_PF}, the OT formulation is used to develop the optimal transport filter.  Some numerical results are presented in~\Sec{sec:numerics} and conclusions in~\Sec{sec:conc}.

\section{Nonlinear filtering and Bayes' formula}\label{sec:NF_PF}

A nonlinear filtering problem involves two stochastic processes:
\begin{enumerate}
\item A \textit{state process}, denoted by $X:=\{X_t \in \Re^n;\, t \in \mathbb T\}$, that represents the hidden state of a dynamical system. 
\item An \textit{observation process}, denoted by $Y:=\{Y_t \in \Re^m;\, t \in \mathbb T\}$, that represents the observed sensor output or measurement data.
 \end{enumerate}
  The problem is formulated either in discrete-time or in continuous-time setting.  In discrete-time, $\mathbb T$ is the set of non-negative integers $\{0,1,2,\ldots\}$.  In continuous-time, $\mathbb T$ is set of non-negative real numbers $[0,\infty)$.  

\subsubparagraph{Discrete-time model:}  The state and observation processes evolve according to a probabilistic relationship.  For $t=1,2,3,\ldots$,
\begin{subequations}
	\begin{align}\label{eq:model-dyn-discrete}
	\text{(state)}	\qquad X_{t} &\sim  a(\cdot \mid X_{t-1}),\quad X_0 \sim \pi_0,\\\label{eq:model-obs-discrete}
	\text{(observation)}	\qquad	Y_t &\sim h(\cdot \mid X_t),
	\end{align}
where 
	$a(\cdot \mid \cdot)$ is the transition probability kernel of the next state $X_{t+1}$ given the current state $X_t$ (i.e. $\mathbb P(X_{t+1}\in A|X_t=x)=a(A|x)$ for any measurable set $A\subset \Re^n$), $h(\cdot \mid \cdot)$ is the transition probability kernels  of $Y_t$ given $X_t$, and $\pi_0$ is the probability measure for the random initial state $X_0$ (i.e, $\mathbb P(X_0\in A)=\pi_0(A)$). The  observation kernel  $h(\cdot \mid \cdot)$ is assumed to admit density $\ell(\cdot\mid \cdot)$, i.e.  $h(B\mid x)=\int_B \ell(y\mid x)\ud y$ for any measurable set $B\subset \Re^m$. The function $\ell(y\mid x)$ represents the likelihood of observing $Y_t=y$ given $X_t=x$.  

In many engineering applications of interest, e.g., target state estimation, it is natural to model $(X,Y)$ as a continuous-time process. In recent years, such continuous-time models have become popular in ML applications related to the diffusion models~\citep{ho2020denoising}.    

\subsubparagraph{Continuous-time model:} The state and observation processes are modeled with stochastic differential equations (SDEs)
	\begin{align}\label{eq:model-dyn-cont}
\text{(state)}	\qquad 	\ud X_t& = a(X_t) \ud t + b(X_t)\ud B_t,\quad X_0 \sim \pi_0,\\\label{eq:model-obs-cont}
	\text{(observation)}	\qquad\ud Y_t &=  h(X_t) \ud t + \ud W_t,
\end{align}
        where $a:\mathbb R^n \to \mathbb R^n$, $b:\mathbb R^n \to \mathbb R^{n\times n}$, and $h:\mathbb R^n \to \mathbb R$ are $C^1$ smooth globally Lipschitz functions, and $B:=\{B_t:t\geq 0\}$ and $W:=\{W_t:t\geq 0\}$ are standard Wiener processes.  A standard assumption is that $X_0$, $B$ and $W$ are mutually independent.  An important special case is the linear Gaussian model where $a(\cdot)$ and $h(\cdot)$ are linear functions, $b(\cdot)$ is a constant matrix, and $\pi_0$ is a Gaussian measure. For simplicity, we have restricted the continuous-time model to the case where the observation is one-dimensional, while the results generalize to the vector-valued observation setting.  

        \bigskip
        
 In either setting, continuous or discrete-time, the nonlinear filtering objective is to compute the conditional probability of the hidden state $X_t$, given the history of observations ($\sigma$-algebra) $\mathcal Y_t:=\sigma\{Y_s;\,s \in \mathbb T \cap [0,t]\}$. The conditional probability measure at time is denoted by $\pi_t$, defined so that, for any measurable set $A\subset \Re^n$, 
\begin{equation}\label{eq:posterior}
	\pi_t(A) := \mathbb P(X_t \in A \mid \mathcal Y_t),\quad \text{for}\quad t \in \mathbb T
\end{equation}
\end{subequations}
The conditional probability $\pi_t$ is also referred to as the {\it nonlinear filter}, the {\it posterior}, or the {\it belief state}.  For the linear Gaussian model, $\pi_t$ is a Gaussian measure whose mean and variance evolve according to the equations of the Kalman filter.  For the continuous-time setting, we assume that $\pi_t$ admits a (Lebesgue) density which is denoted as $p_t$.  That is $\pi_t(A) = \int_A p_t(x) \ud x$. 

\subsection{Nonlinear filter}

The nonlinear filter~\eqref{eq:posterior} admits a recursive structure which is  useful in the design of filtering algorithms. To present this recursive structure, we introduce the following operators in the discrete-time:
\begin{subequations}
	\begin{align}\label{eq:propagation}
		\text{(propagation)}\qquad \pi \mapsto \mathcal A \pi(A) &:= \int_{\mathbb R^n} a(A \mid x) \ud \pi(x),\\\label{eq:conditioning}
		\text{(Bayes update)}\qquad \pi \mapsto \mathcal B_y (\pi)(A) \!&:=\! \frac{\int_A\ell(y \mid x)\ud \pi(x)}{\int_{\mathbb R^n} \ell(y \mid x) \ud \pi(x)},\quad  \text{for all measurable subsets } A\subset \mathbb R^n.
	\end{align}
        The linear operator $\mathcal A$ is the \textit{forward operator}, referred to as such because it pushes forward (propagates) the measure using the dynamic model~\eqref{eq:model-dyn-discrete}. The nonlinear operator $\mathcal B_y$ is referred to as the \textit{Bayes update operator} because it implements the Bayes' formula.  In terms of these operators, $\pi_t$ follows the sequential update law \cite{cappe2009inference}
	\begin{align}\label{eq:exact-posterior}
		\pi_{t \mid t-1} &= \mathcal{A} \pi_{t-1},\quad 
		\pi_{t} =\mathcal{B}_{Y_t} (\pi_{t \mid t-1}),
	\end{align} 
	where the notation $\pi_{t \mid t-1}(\cdot):=\mathbb P(X_t \in \cdot \mid \mathcal Y_{t-1})$ is used to denote the conditional measure of $X_t$ before applying the Bayes update to account for the observation $Y_t$ made at time $t$. 
	We introduce the notation for the transition operator \[\mathcal T_{t,s}:=\mathcal{B}_{Y_t} \circ \mathcal A\circ \mathcal{B}_{Y_{t-1}}\circ  \mathcal A \circ \ldots  \circ \mathcal{B}_{Y_{s+1}} \circ \mathcal A\] to express $\pi_t = \mathcal T_{t,s}(\pi_s)$ for $t>s\geq 0$, and in particular, $\pi_t = \mathcal T_{t,0}(\pi_0)$.
\end{subequations}

\subsubparagraph{Continuous-time nonlinear filter:} This filter also admits a recursive structure. For any smooth bounded (test) function $f$, the continuous-time filter equation is given by
\[
\ud \pi_t(f) = \underbrace{\pi_t({\cal D}f) \ud t}_{\text{(propagation)}} + \underbrace{\pi_t((h-\pi_t(h)f) (\ud Z_t - \pi_t(h)\ud t)}_{\text{(Bayes update)}}, \qquad t\geq 0
\]
where the filter is initialized at time $t=0$ using $\pi_0$ and ${\cal D}$ is the generator of the SDE~\eqref{eq:model-dyn-cont}.  On the right-hand side, the first term is the propagation and the second term is the Bayes update.

\bigskip

Although the nonlinear filter equation is well established (since 1960s), the continuous-time filter may be unfamiliar to readers not steeped in the theory of nonlinear filtering.  The important take-away is the recursive structure of the filter involving propagation and the Bayesian update steps.  The two-step structure is common to both continuous-time and discrete-time settings of the problem. In fact, an elementary derivation of the continuous-time nonlinear filter is based on taking an appropriate limit of the two discrete-time operators $\mathcal A$ and $\mathcal {B}_{\Delta Y_t}$~\cite[Sec.~6.8]{jazwinski2007stochastic}.

\subsection{Particle filter}

Any numerical approximation of the nonlinear filter requires a finite-dimensional representation of the conditional measure $\pi_t$. Exact finite-dimensional representations are known only in a few special cases, the most prominent of which is the linear Gaussian case. Even in the linear Gaussian case, an exact implementation becomes computationally intractable in very high dimensions, e.g., in applications in geo-sciences and weather prediction~\citep{evensen2006,houtekamer01}.

All of this has motivated a Monte-Carlo or particle-based approaches where the posterior is approximated by the empirical distribution of an ensemble of particles $\{X^1_t,\ldots,X^N_t\}$ as follows: 
\begin{align*}
	\pi_t \approx \frac{1}{N}\sum_{i=1}^N \delta_{X^i_t}
\end{align*}
where $\delta_x$ is the dirac delta distribution at $x$ (see Figure~\ref{fig:empitical-approx}).  In terms of the particles, the implementation of the propagation step is straightforward: simply use the kernel $a(\cdot\mid \cdot)$ to sample in discrete-time or the SDE~\eqref{eq:model-dyn-cont} in continuous-time.  The main difficulty is the implementation of the Bayes' update step. This is the subject of the following section.

\begin{figure}[t]
	\centering 
	\includegraphics[width=0.35\hsize]{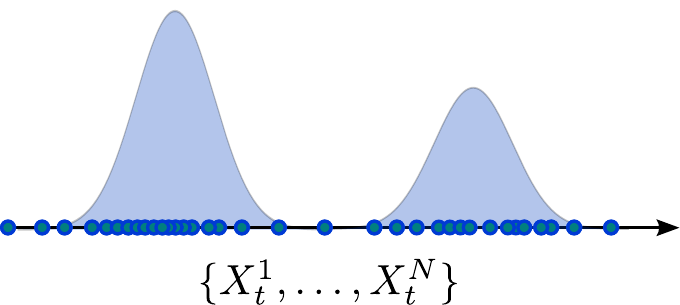}
	\caption{This exposition is concerned with particle filter algorithms that seek to approximate the posterior with the empirical distribution of an ensemble of particles.}
	\label{fig:empitical-approx}
\end{figure}

\begin{remark}
  Although their paper is often overlooked, Handschin and Mayne wrote a prescient paper, in 1960s, introducing the particle filter for the problem of nonlinear filtering~\citep{handschin1969monte}.  In part because of the emergence and wide-spread adoption of computers at the time, the field received an impetus from two highly cited papers published in early 1990s: 
  \begin{enumerate}
  \item~\cite{gordon1993novel} is widely regarded as introducing the modern particle filter~\citep{cappe2007overview}.
  \item~\cite{evensen1994sequential} introduced the ensemble Kalman filter.  This paper spawned an area known as {\em Data assimilation} which aims to develop and apply nonlinear filtering algorithms for high-dimensional applications such as weather prediction.  
  \end{enumerate}
\end{remark}

\section{Fundamental challenges in implementing the Bayes update} \label{sec:IS_ENkF}
This section is concerned with algorithms for implementing the Bayes operator ${\cal B}_y$ for the nonlinear filter.  Specifically, 
let $X$ be a random variable with probability law $P_X$. And let $Y \sim h(\cdot\mid X)$ be the observation random variable. Let $P_{X,Y}$ denote the joint probability law of $(X,Y)$ and let $P_{X|Y}(\cdot|y)$ be the conditional probability law of $X$ given the observation $Y=y$.   
In a particle-based setting, we have $N$ samples drawn i.i.d. from the prior $P_X$.  Our objective is to obtain $N$ samples from the conditional $P_{X|Y}(\cdot|y)$, where the value $y$ is allowed to be arbitrary.
\subsubparagraph{Bayes sampling problem:}
\begin{align*}
	\text{given:}&\qquad \{X^1,\ldots,X^N\} \sim P_X,
	\\
	\text{generate:}&\qquad \{X^1_{\mid y},\ldots,X^N_{\mid y}\} \sim P_{X|Y}(\cdot|y)\quad \text{for a given $y$.}
\end{align*}
It is important to note that the explicit form of $P_X$ is not assumed.  One {\em only} assumes $N$ particles sampled from $P_X$.  
Regarding our knowledge of the observation model, we consider two settings: 
\begin{enumerate}
	\item {\bf Analytical:} The likelihood function  is known in an explicit form. That is one can evaluate $\ell(y\mid x)$ for all $x\in\Re^n$ and $y\in\Re^m$.
	\item {\bf Simulation-based:} The likelihood function  is not available in an explicit form. Instead, one can generate  samples $Y \sim h(\cdot \mid x)$ for any given $x\in \Re^n$. 
\end{enumerate}

The algorithms that are presented below are either based on an analytical or simulation-based knowledge of the observation model.  We make the distinction clear when describing the algorithm.

For the purposes of error analysis, the following metric is introduced.  Denote by $\mathcal G$ the space of functions that are bounded in absolute value by one and that are Lipschitz with {a Lipschitz constant} smaller than one:
\[
\mathcal G:=\{g:\Re^n \to \Re:\; \;|g(x)|\leq 1 \;\; \text{and} \; \; |g(x)-g(x')|\leq |x-x'|,\quad \forall x,x'\in\Re^n\}
\]

\begin{definition}
Let  $\mu,\nu$ be possibly random probability measures. The {\it dual bounded-Lipschitz metric} is given by
\[
d(\mu,\nu) := \sup_{g \in \mathcal G} \sqrt{\mathbb{E}\left|\mu(g) - \nu(g)\right|^2}
\]
\end{definition}

\subsection{Bayes update using importance sampling and the curse of dimensionality}

A vanilla importance sampling and resampling (SIR) particle filter (PF) carries out this task by first forming a weighted empirical distribution and then resampling from the weighted distribution~\cite{gordon1993novel,arulampalam2002tutorial,gordon2004,doucet2001sequential,doucet09}.  The two steps are as follows:
	\begin{enumerate}
	\item \textbf{Step 1.} The importance weights are computed
                  \[
\text{(importance weights)}\qquad w^i = \frac{\ell(y|X^i)}{\sum_{j=1}^N \ell y|X^j)},\qquad i=1,2,\hdots,N
\] This step is
                  called importance sampling. The likelihood function is explicitly used in this step. 
		\item \textbf{Step 2.} Next $N$ particles are
                  independently sampled from the weighted
                  distribution:
                  \[
\text{(re-sampling)}\qquad	X_{\mid y}^i \sim P^{\text{SIR}}_{X|Y}(\cdot|y):=\sum_{j=1}^N w^j \delta_{X^j}
        \]
        by sampling from a multinomial
                  distribution with parameter vector
                  $(N,\{w_i\}_{i=1}^N)$. This step is called
                  resampling.
	\end{enumerate}
        
It is shown that SIR PF is consistent, and the approximation error decreases with the rate $O(\frac{1}{\sqrt{N}})$~\cite{del2001stability}.
However, the PF is known to perform poorly because of the problem of weight degeneracy, whereby all but few particles have negligible (nearly zero) weights.  The problem is known to become worse in high dimensions and is referred to as the curse of dimensionality (CoD) for PF~\citep{doucet09,gordon2004,bickel2008sharp,bengtsson08,beskos2014error,rebeschini2015can}. A quantitative result on the CoD is given in the following Proposition. 
\begin{proposition}
	Consider the SIR procedure for approximating the conditional distribution. Suppose $\overline X$ is an independent copy of $X$.  Assume $\mathbb E[\frac{\ell(Y|\overline X)^2}{\mathbb E[\ell(Y|\overline X)|Y]^2}]<\infty$. Then, 
	\begin{align}\label{eq:SIR-bound}
		\liminf_{N \to \infty}\sqrt{N} d(P_{X|Y}^{\text{SIR}}(\cdot|Y),P_{X|Y}(\cdot|Y)) &\geq  \sup_{g \in \mathcal G}\sqrt{V(g)}
	\end{align}
	where $V(g):=\mathbb E[\frac{\ell(Y|\overline X)^2}{\mathbb E[\ell(Y|\overline X)|Y]^2}(g(\overline X)-\mathbb E[g(X)|Y])^2]$. In particular, for the special case where $X=[X_1,X_2,\hdots,X_n]$ and $Y=[Y_1,Y_2,\hdots,Y_n]$ are $n$-dimensional random vectors with independent and identically distributed components (i.e., $(X_j,Y_j)$ and $(X_i,Y_i)$ are mutually independent for $i\neq j$), then there exists  $N_0$ large enough such that  for all $N>N_0$: 
	\[d(P_{X|Y}^{\text{SIR}}(\cdot|Y),P_{X|Y}(\cdot|Y)) \geq  \frac{C\gamma^n}{\sqrt{N}}\]
	where $C$  and $\gamma>1$ are constants, independent of the dimension $n$. 
\end{proposition}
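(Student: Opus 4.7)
My plan is to reduce the self-normalized SIR estimator to a centered sample mean, apply a conditional central limit theorem, and then use a lower semi-continuous version of the portmanteau theorem to lift the weak convergence to the required $L^2$ lower bound; the curse-of-dimensionality corollary then follows by tensorization. Concretely, I would write $P^{\text{SIR}}_{X|Y}(g|y) = A_N(y)/B_N(y)$ with $A_N(y) := \frac{1}{N}\sum_{i=1}^N \ell(y|X^i)g(X^i)$ and $B_N(y) := \frac{1}{N}\sum_{i=1}^N \ell(y|X^i)$, and identify $P_{X|Y}(g|y) = A(y)/B(y)$ with $A(y) := \mathbb{E}[\ell(y|\overline X)g(\overline X)]$ and $B(y) := \mathbb{E}[\ell(y|\overline X)|Y=y]$ via Bayes' formula. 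A single algebraic rearrangement gives
\[
\sqrt{N}\bigl(P^{\text{SIR}}_{X|Y}(g|Y) - P_{X|Y}(g|Y)\bigr) \;=\; \frac{1}{B_N(Y)}\cdot\frac{1}{\sqrt N}\sum_{i=1}^N \ell(Y|X^i)\bigl(g(X^i) - \mathbb{E}[g(X)|Y]\bigr),
\]
whose summands are, conditional on $Y$, i.i.d.\ with mean zero (the centering absorbs the bias exactly).

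Next the law of large numbers yields $B_N(Y)\to B(Y)$ a.s., and Lindeberg's conditional CLT combined with Slutsky's lemma gives
\[
\sqrt{N}\bigl(P^{\text{SIR}}_{X|Y}(g|Y) - P_{X|Y}(g|Y)\bigr) \xrightarrow{d} G\sqrt{V_Y(g)},\quad V_Y(g) := \frac{\mathbb{E}[\ell(Y|\overline X)^2(g(\overline X)-\mathbb{E}[g(X)|Y])^2|Y]}{B(Y)^2},
\]
with $G\sim\mathcal N(0,1)$ independent of $Y$ (the unconditional weak convergence is inherited from the conditional one by dominated convergence against bounded continuous test functions). Because $z\mapsto z^2$ is nonnegative and lower semi-continuous, the portmanteau theorem gives $\liminf_N N\,\mathbb{E}|P^{\text{SIR}}_{X|Y}(g|Y)-P_{X|Y}(g|Y)|^2 \geq \mathbb{E}|G|^2\mathbb{E}[V_Y(g)] = V(g)$, where the tower property and $B(Y)=\mathbb{E}[\ell(Y|\overline X)|Y]$ give the final equality. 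Taking square roots and using the elementary inequality $\liminf\sup \geq \sup\liminf$ proves \eqref{eq:SIR-bound}. This lifting from weak to $L^2$ convergence is the main technical step, and it is precisely where the hypothesis $\mathbb{E}[\ell(Y|\overline X)^2/\mathbb{E}[\ell(Y|\overline X)|Y]^2]<\infty$ is used: it guarantees $V(g)<\infty$ for every $g\in\mathcal G$ (using $|g|\leq 1$) so that the lower bound is meaningful and the CLT applies.

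For the exponential dimensional bound, I would test against the coordinate function $g(x) = g_1(x_1)$ for some fixed $1$-Lipschitz $g_1:\Re\to [-1,1]$ (such a $g$ automatically lies in $\mathcal G$ on $\Re^n$). Under the i.i.d.\ component assumption, $\ell(Y|\overline X) = \prod_j \ell_1(Y_j|\overline X_j)$, $\mathbb{E}[\ell(Y|\overline X)|Y] = \prod_j \mathbb{E}[\ell_1(Y_j|\overline X_j)|Y_j]$, and $\mathbb{E}[g(X)|Y]$ collapses to $\mathbb{E}[g_1(X_1)|Y_1]$; independence of the coordinate triples $(X_j,\overline X_j,Y_j)$ then factorizes $V(g) = C_0^2\,\gamma^{2(n-1)}$ with
\[
\gamma^2 := \mathbb{E}\!\left[\frac{\ell_1(Y_1|\overline X_1)^2}{\mathbb{E}[\ell_1(Y_1|\overline X_1)|Y_1]^2}\right] > 1
\]
by strict Jensen's inequality whenever $\ell_1$ genuinely depends on its second argument, and $C_0>0$ is the coordinate-$1$ signal factor. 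Substituting $\sup_g\sqrt{V(g)}\geq C_0\gamma^{n-1}$ into the inequality from the first part yields $d(P^{\text{SIR}}_{X|Y}(\cdot|Y),P_{X|Y}(\cdot|Y)) \geq C\gamma^n/\sqrt{N}$ for all $N$ beyond some threshold $N_0$, with $C := C_0/(2\gamma)$, completing the proof.
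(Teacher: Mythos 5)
Your proposal is correct and follows essentially the same route as the paper, which proves the result by applying the central limit theorem for self-normalized importance sampling (the decomposition $A_N/B_N - A/B$ with exact centering, Slutsky, and the product-form tensorization of the asymptotic variance for the i.i.d.-component case). The one step you supply that the paper's citation leaves implicit --- lifting the weak convergence to the $L^2$ lower bound via the portmanteau inequality for the nonnegative lower semi-continuous map $z\mapsto z^2$ --- is handled correctly and is exactly where the finite-second-moment hypothesis enters.
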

\begin{theorem*}[Proof] 
	The proof is based on the application of central limit theorem  for importance sampling~\cite[Thm. 9.1.8]{cappe2009inference} and appears in~\cite[Appendix B.4.1]{al2023high}. 
\end{theorem*}

\newcommand{\NN}{\mathcal{N}}
\newcommand{\KN}{{\sf K}^{(N)}}
\newcommand{\K}{{\sf K}}
\newcommand{\mN}{m^{(N)}}
\newcommand{\SigN}{\Sigma^{(N)}}

\subsection{Bayesian update for the ensemble Kalman filter}

In the light of the negative result for PF, let us re-visit the linear model~\eqref{eq:linear_model_intro} introduced in \Sec{sec:intro} where the samples $X^i \stackrel{\text{i.i.d.}}{\sim} N(m,\Sigma)$  for $i=1,2\hdots,N$.  Consistent with our assumption, explicit knowledge of $m$ and $\Sigma$ is not assumed.  Of course, these may be estimated empirically using the particles.  These estimates are denoted by
\[
m^{(N)}:= \frac{1}{N} \sum_{i=1}^N X^i,\qquad \Sigma^{(N)}:= \frac{1}{N-1} \sum_{i=1}^N (X^i - m^{(N)})(X^i - m^{(N)})^{\rm T}
\]
Our goal is to sample $\{X_{\mid y}^i:i=1,2,\hdots,N\}$ from the (Gaussian) posterior.

In order to explain the EnKF update, it is useful to introduce some notation.  This notation will be used consistently in the remainder of the paper in other more general settings of the problem.  Let $\overline{X}\sim P_{X}$ be an independent copy of $X$.  That is, $\overline{X}$ has the same Gaussian distribution $N(m_0,\Sigma_0)$ as the prior.  Using the model~\eqref{eq:linear_model_intro}, consider
\[
\overline{Y} = H \overline{X} + \overline{W}
\]
where $\overline{W} \sim N(0,R)$ is an independent copy of $W$.  Then $(\overline{X},\overline{Y})\sim P_{X,Y}$.  The EnKF update is based on defining the following affine map:
\begin{subequations}
\begin{equation}\label{eq:ENKF_map}
T(\overline{X},y) := \overline{X} + K (y - \overline{Y})
\end{equation}
where $K= \Sigma H\tp(H\Sigma H\tp + R)^{-1}$ is the Kalman gain (same formula as the one described in \Sec{sec:intro}). The reason for defining the map appears in the following proposition.

\begin{proposition}\label{Prop:EnKF}
Set $\overline{X}_{\mid_y}  =  T(\overline{X},y)$.  Then $\overline{X}_{\mid_y} \sim P_{X|Y}(\cdot|y)$.  
\end{proposition}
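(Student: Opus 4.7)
The plan is to exploit the linear Gaussian structure of the problem. For the model~\eqref{eq:linear_model_intro}, the posterior $P_{X|Y}(\cdot|y)$ is Gaussian: its mean is $m + K(y - Hm)$ (already noted in \Sec{sec:intro} via the optimization derivation), and its covariance is $(I - KH)\Sigma$ (obtained either from the Hessian of $J(x)$ combined with the matrix inversion lemma, or from the standard Gaussian conditioning formula). The random variable $\overline{X}_{|y}$ is an affine function of the independent Gaussian pair $(\overline{X}, \overline{W})$, hence is itself Gaussian. It therefore suffices to match means and covariances.

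Substituting $\overline{Y} = H\overline{X} + \overline{W}$ into the definition of $T$ gives
\[
\overline{X}_{|y} = (I - KH)\overline{X} + K y - K\overline{W},
\]
from which the mean is immediately $(I - KH)m + K y = m + K(y - Hm)$, matching the posterior mean.

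For the covariance, independence of $\overline{X}$ and $\overline{W}$ yields the Joseph form
\[
\mathrm{Cov}(\overline{X}_{|y}) = (I - KH)\Sigma (I - KH)\tp + K R K\tp.
\]
The single algebraic identity needed is $K(H\Sigma H\tp + R) = \Sigma H\tp$, which is just the definition of the Kalman gain. Applying it gives $K R K\tp = \Sigma H\tp K\tp - K H \Sigma H\tp K\tp$, and plugging this back into the Joseph form causes the cross terms to cancel, leaving $(I - KH)\Sigma$, exactly the posterior covariance.

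The main (and really the only) nontrivial step is the Joseph form simplification; everything else is bookkeeping on an affine transformation of Gaussians. Since $\overline{X}_{|y}$ and $P_{X|Y}(\cdot|y)$ are both Gaussian and agree on their first two moments, they coincide as distributions, which proves the claim. A useful remark to include after the proof is that the map $T$ in~\eqref{eq:ENKF_map} is not the unique transport between $P_X$ and $P_{X|Y}(\cdot|y)$ for a given $y$; this non-uniqueness foreshadows the optimal transportation viewpoint developed in later sections.
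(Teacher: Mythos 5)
Your proof is correct and follows essentially the same route as the paper's: observe that the affine map sends Gaussians to Gaussians, then match the mean and covariance against the known posterior mean $m+K(y-Hm)$ and covariance $\Sigma-\Sigma H\tp(H\Sigma H\tp+R)^{-1}H\Sigma=(I-KH)\Sigma$. The only cosmetic difference is that you reach the covariance via the Joseph form $(I-KH)\Sigma(I-KH)\tp+KRK\tp$ and the identity $K(H\Sigma H\tp+R)=\Sigma H\tp$, whereas the paper's appendix expands $\mathrm{Cov}(\overline X+K(y-\overline Y))$ directly in terms of $\mathrm{Cov}(\overline Y)$ and $\mathrm{Cov}(\overline X,\overline Y)$; both are the same bookkeeping.
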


\begin{proof}
Because the affine nature of the map, $\overline{X}_{\mid_y}$ is Gaussian.  The proof is completed by computing the mean and variance and showing these to be identical to the conditional mean and covariance. These calculations can be found in the Appendix. 
\end{proof}

Using particles, the EnKF update is expressed as 

\begin{equation}\label{eq:Xi-EnKF-discrete-particles}
\text{(EnKF Bayes update)}\qquad X_{\mid_y}^i = X^i + K^{(N)}( y - Y^i),\quad i=1,2,\hdots,N
\end{equation}
\end{subequations}
where $Y^i=HX^i+W^i$, $W^i \stackrel{\text{i.i.d.}}{\sim} N(0,R)$ are independent copies of the
observation noise, and $K^{(N)} = \Sigma^{(N)} H\tp( H\Sigma^{(N)}H\tp  +  R)^{-1}$ is the empirical approximation of the Kalman gain.  The empirical distribution is denoted by,
\begin{equation*}
	P^{\text{EnKF}}_{X|Y=y}(\cdot|y):= \frac{1}{N}\sum_{i=1}^N \delta_{X^i_1}
\end{equation*}
Note that the EnKF algorithm only requires a simulation-based assumption  about the observation model. 
Error analysis for the same is given in the following proposition. 
\begin{proposition}\label{prop:EnKF-error}
	Suppose $P_{X,Y}$ is Gaussian. Then 
	\begin{align}\label{eq:EnKF-bound}
		d(P_{X|Y}^{\text{EnKF}}(\cdot|Y),P_{X|Y}(\cdot|Y)) &\leq \frac{1}{\sqrt{N}} + 2 \mathbb E[\|Y\|^4]^{\frac{1}{4}}\mathbb E[\|(K^{(N)} - K)\|^4]^{\frac{1}{4}}.
	\end{align}
\end{proposition}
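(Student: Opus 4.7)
The plan is a two-step coupling argument: interpose an ``oracle'' ensemble that uses the true Kalman gain $K$ in place of the empirical gain $K^{(N)}$, bound the oracle-to-truth error by a Monte Carlo argument, and bound the EnKF-to-oracle error by a Lipschitz/Cauchy--Schwarz calculation.

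First, using the same particles $\{X^i\}_{i=1}^N$ and the same synthetic noise $\{W^i\}_{i=1}^N$ that already appear in the EnKF update, define the oracle samples
$\tilde{X}^i_{\mid Y} := X^i + K(Y - Y^i)$ for $i=1,\ldots,N$, where $Y^i = HX^i + W^i$. By Proposition~\ref{Prop:EnKF} each $\tilde{X}^i_{\mid Y}$ is distributed as $P_{X\mid Y}(\cdot\mid Y)$, and conditional on $Y$ (with which the i.i.d.\ pairs $(X^i,Y^i)$ are independent) the family $\{\tilde{X}^i_{\mid Y}\}_{i=1}^N$ is i.i.d.\ from $P_{X\mid Y}(\cdot\mid Y)$. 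Let $P^*_N(\cdot\mid Y):=\frac{1}{N}\sum_i\delta_{\tilde X^i_{\mid Y}}$ denote its empirical distribution, and split via the triangle inequality
\[
d(P^{\text{EnKF}}_{X\mid Y}(\cdot\mid Y),P_{X\mid Y}(\cdot\mid Y))\;\leq\; d(P^{\text{EnKF}}_{X\mid Y}(\cdot\mid Y),P^*_N(\cdot\mid Y))\;+\;d(P^*_N(\cdot\mid Y),P_{X\mid Y}(\cdot\mid Y)).
\]

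For the second term, fix any $g\in\mathcal G$. Conditioning on $Y$, the variables $g(\tilde X^i_{\mid Y})$ are i.i.d.\ with mean $P_{X\mid Y}(g\mid Y)$ and, since $|g|\leq 1$, variance at most $1$. Hence $\mathbb E\bigl[|P^*_N(g\mid Y)-P_{X\mid Y}(g\mid Y)|^2\mid Y\bigr]\leq \frac{1}{N}$, which after taking expectations and supremum over $g\in\mathcal G$ yields $d(P^*_N(\cdot\mid Y),P_{X\mid Y}(\cdot\mid Y))\leq \tfrac{1}{\sqrt{N}}$.

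For the first term, compare particle by particle: since $X^i_{\mid Y}-\tilde X^i_{\mid Y}=(K^{(N)}-K)(Y-Y^i)$ and $g$ is $1$-Lipschitz,
\[
\bigl|P^{\text{EnKF}}_{X\mid Y}(g\mid Y)-P^*_N(g\mid Y)\bigr|^2
\;\leq\; \frac{1}{N}\sum_{i=1}^N \bigl\|K^{(N)}-K\bigr\|^2\,\bigl\|Y-Y^i\bigr\|^2,
\]
by Jensen's inequality. Taking expectation, using the symmetry in $i$, and then Cauchy--Schwarz gives
\[
\mathbb E\bigl|P^{\text{EnKF}}_{X\mid Y}(g\mid Y)-P^*_N(g\mid Y)\bigr|^2 \;\leq\; \mathbb E\bigl[\|K^{(N)}-K\|^4\bigr]^{1/2}\,\mathbb E\bigl[\|Y-Y^1\|^4\bigr]^{1/2}.
\]
Finally, the elementary inequality $(a+b)^4\leq 8(a^4+b^4)$ together with $Y^1\stackrel{d}{=}Y$ yields $\mathbb E[\|Y-Y^1\|^4]^{1/4}\leq 2\,\mathbb E[\|Y\|^4]^{1/4}$. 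Combining with the previous bound, taking square roots and supremum over $g\in\mathcal G$ delivers the claimed second term, and the proposition follows.

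The only genuinely delicate point is the conditional-i.i.d.\ argument underpinning the $1/\sqrt{N}$ term: one must verify that the \emph{same} synthetic observations $Y^i$ used inside $K^{(N)}$ can legitimately be reused in the oracle ensemble, and that independence of $(X^i,Y^i)$ from the true observation $Y$ is preserved throughout. Everything else is a routine Lipschitz-plus-Cauchy--Schwarz computation; in particular, the Gaussian assumption on $P_{X,Y}$ is used only to invoke Proposition~\ref{Prop:EnKF} so that the oracle samples are genuinely posterior-distributed.
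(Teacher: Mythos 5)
Your proposal is correct and follows essentially the same route as the paper's proof: a triangle-inequality decomposition through the oracle ensemble $X^i + K(Y-Y^i)$ with the exact gain, a Monte-Carlo bound of $1/\sqrt{N}$ for the oracle term using $|g|\leq 1$ and conditional independence, and a Lipschitz-plus-Cauchy--Schwarz estimate (with $\mathbb E[\|Y-Y^1\|^4]^{1/4}\leq 2\,\mathbb E[\|Y\|^4]^{1/4}$) for the gain-approximation term. The only cosmetic difference is that you name the intermediate empirical oracle measure explicitly, whereas the paper writes the same two terms directly.
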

\begin{proof}
See Appendix. 
\end{proof}

The first term in the error bound~\eqref{eq:EnKF-bound} is due to the Monte-Carlo sampling error and the second term is due to empirical approximation of the gain matrix, which is expected to scale polynomially with the problem dimension. As a result, EnKF algorithm does not suffer from the curse of dimensionality. This is also verified in the continuous-time setting~\cite{surace2019avoid,taghvaei2020optimal}. However, the EnKF algorithm  gives an asymptotically exact approximation of the posterior only in the Gaussian setting.  This limitation motivated generalization of the EnKF algorithm to the non-Gaussian setting. In the next section, we present such a generalization based on optimal transportation theory.

Apart from~\eqref{eq:Xi-EnKF-discrete-particles}, there are other forms of the EnKF update.  One particular update -- that has been crucial in successful application of EnKF in geosciences -- is the
ensemble square-root Kalman filter (EnSRKF)~\citep{whitaker2002ensemble} and~\citep[Sec. 7.1]{reich2015probabilistic}.  

Although presented in the Gaussian setting, the EnKF update~\eqref{eq:Xi-EnKF-discrete-particles} may be also implemented for a non-Gaussian prior and nonlinear observation model, where the gain $K=\text{Cov}(X,Y)\text{Cov}(Y)^{-1}$ is approximated in terms of particles $(X^i,Y^i)_{i=1}^N$.

\subsection{Bayes update in feedback particle filter (FPF)}

The feedback particle filter (FPF) algorithm~\citep{Tao_TAC,yang2016} is designed to numerically approximate the solution to the continuous-time nonlinear filtering problem where the state and observations are modeled according to~\eqref{eq:model-dyn-cont}-\eqref{eq:model-obs-cont}. To illustrate the Bayesian update step, assume the state is static, i.e. $a(x)\equiv 0$ and $b(x)\equiv 0$, and $\pi_0$ admits a $C^2$ density denoted as $p_0$.  In this case, the FPF algorithm proceeds by simulating a controlled stochastic process
\begin{subequations}
\begin{equation}\label{eq:Xbar}
	\ud \overline{X}_t = \k_t(\overline{X}_t) \ud Y_t + u_t(\overline{X}_t) \ud t,\qquad \overline{X}_0\sim \pi_0
\end{equation}
where the vector-fields $\k_t$ and $u_t$ are designed  such that the probability density $\bar{p}_t$ of $\overline{X}_t$ coincides with the
posterior density $p_t$ for all $t\geq 0$. The derivation of the vector-fields $\k_t$ and $u_t$ follows by matching the evolution equations for two densities. 
Although the matching procedure does not lead to a unique specification of the vector-fields, the choice in FPF is as follows:
	\begin{align}\label{eq:K-u}
	\k_t:=\nabla \phi_t,\quad \text{and}\quad 
	u_t := -\frac{h+\hat{h}_t}{2} \nabla \phi_t + \frac{1}{2}\nabla^2 \phi_t \nabla \phi_t
	\end{align}
	 where $\phi_t$ solves the Poisson equation 
	\begin{equation}
	- \frac{1}{p_t(x)} \nabla \cdot (p_t(x)\nabla \phi_t(x)) = h(x) - \overline{h}_t,\quad \forall x \in \Re^n,
	\end{equation} 
        and $\overline{h}_t = \int_{\Re^n} h(x) p_t(x) \ud x$.
\end{subequations}
Combining these the FPF Bayesian update formula is given by
\[
\text{(FPF Bayes update)}\qquad \ud \overline{X}_t = \nabla \phi_t (\overline{X}_t) \circ \left(\ud Y_t - \frac{h(\overline{X}_t) + \overline{h}_t}{2} \ud t \right), \qquad \overline{X}_0\sim \pi_0
\]
where $\circ$ means that the SDE is expressed in its Stratonovich form.  A particle form of the FPF update is obtained by approximating the terms empirically, e.g., $\overline{h}_t \approx \overline{h}_t^{(N)} = N^{-1} \sum_{i=1}^N h(X_t^i)$.  The main difficulty lies in solving the Poisson equation and the particle-based algorithms for the same appear in~\citep{taghvaei2020diffusion} (see also~\cite[Sec.~4]{taghvaei2023survey} and~\cite[Sidebar on page~46]{taghvaei2021optimal}).

\subsection{Summary}

  For both EnKF and the FPF algorithms, the Bayes update formula is based on two steps:
  \begin{enumerate}
  \item Defining a random variable $\overline {X} \sim P_{X}$ as an independent copy of $X$.  In an FPF, this step involved defining $\overline {X}_0 \sim \pi_0$.
  \item Defining a map $T=T(\overline {X}, Y)$ where $Y$ is observed random variable.  In an FPF, the observations $\{Y_s:0\leq s\leq t\}$ are over a time-period $[0,t]$.  So, the map $T=T(\overline {X}_0 , \{Y_s:0\leq s\leq t\})$.  The FPF SDE is merely a convenient description to implement the map in continuous-time settings of the problem.  
  \end{enumerate}

  In the following section, the central result of this paper is described.  While the result is motivated by the EnKF and FPF update formulae, and recent (since 2010s) developments in the application of optimal transport for Bayesian inference (specifically, the ground-breaking works of~\citep{el2012bayesian,spantini2022coupling} and~\citep{reich2013nonparametric}), the form presented below appeared for the first time in a paper from the first author and collaborators in UW Seattle~\citep{taghvaei2022optimal,al2023optimal,al2023high}.

  \section{Bayes update with optimal transport maps (a.k.a. ``implementing Bayes update in 2020s'')}\label{sec:OT_Bayes}

We consider the Bayes sampling problem introduced in the preceding section with the simulation-based assumption about observation model.   The design problem we seek to solve is the following:  

\subsubparagraph{Problem:} Design a map $T$ such that 
\begin{equation}\label{eq:consistency-condition}
	T(\cdot,y)_{\#} P_X = P_{X|Y}(\cdot|y),\quad \forall y.
\end{equation}
where $\#$ denotes the push-forward operator.

Once the map is known then the Bayes sampling problem is solved simply as $X^i_{\mid y} = T(X^i,y)$ for $i=1,2,\ldots,N$. 
The following example is illustrative. 

\begin{example}[Noiseless observation]
	Suppose  $Y=h(X)$ where $h$ is an invertible map. Then, the conditional $P_{X|Y}(\cdot|y)= \delta_{h^{-1}(y)}(\cdot)$ is simply given by a transport map $T(x,y) = h^{-1}(y)$. 
\end{example}

We refer to~\eqref{eq:consistency-condition} as the consistency condition. There are many possible choices of maps $T$ that satisfy the consistency condition.  For this reason, it is natural to use the theory of optimal transportation~\cite{villani2003topics} to uniquely select a map $T$ that satisfies~\eqref{eq:consistency-condition}. Before describing the procedure, we introduce some notation and definitions as follows:
\begin{enumerate}
\item $P_X \otimes P_Y$ is used to denote the joint distribution obtained by multiplying the two marginals $P_X$ and $P_Y$.  Such a joint distribution is referred to as the independent coupling.
\item $\mathcal{M}(P_X \otimes P_Y)$ is the set of maps from $\Re^n \times \Re^m$ to $\Re^n$ that are $P_X \otimes P_Y$-measurable.
\item The cost function $c(x,x')=\frac{1}{2}|x-x'|^2$.
\item  A function $f(x)$ is said to be $c$-concave if $\frac{1}{2}|x|^2-f(x)$ is convex.
\item A function $f(x,y)$ is said to be $c$-concave$_x$  if $\frac{1}{2}|x|^2-f(x,y)$ is convex (i.e., the function is $c$-concave  in the $x$-argument). 
\end{enumerate}

\bigskip

The procedure is as follows.

\subsubparagraph{Step 1.}
Suppose $\overline X$ is an independent copy of $X$.  Replace the condition \eqref{eq:consistency-condition} with 
	\begin{equation}\label{eq:T-constraint-joint}
	(T(\overline X,Y),Y) \sim P_{X,Y}.
\end{equation}
        The justification for~\eqref{eq:T-constraint-joint} is as follows.  Condition~\eqref{eq:T-constraint-joint} implies $\mathbb E[f(T(\overline X,Y))g(Y)] =   \mathbb E[f(X)g(Y)]$ for all measurable and bounded functions $f:\Re^n \to \Re$ and $g:\Re^m \to \Re$, concluding that $T(\overline X,Y) \sim P_{X|Y}(\cdot|Y)$ by the definition of conditional expectation. A more rigorous justification appears in~\cite[Appendix B.1]{al2023high} and~\cite[Thm. 2.4] {kovachki2020conditional}.

\subsubparagraph{Step 2.}
	In order to select a unique map that satisfies the condition~\eqref{eq:T-constraint-joint}, we formulate the (conditional) Monge problem under quadratic cost:
	\begin{equation} \label{eq:Monge OT}
		\begin{aligned}
			\min_{T \in \mathcal{M}(P_X \otimes P_Y)}\, &\mathbb E\left[c(T(\overline X,Y),\overline X)\right],\quad \text{s.t.~ (\ref{eq:T-constraint-joint}) holds.} 
		\end{aligned}
	\end{equation} 
	The optimization~(\ref{eq:Monge OT}) is viewed as the Monge problem between the independent coupling $(\overline X,Y) \sim P_X \otimes P_Y$ and the joint distribution $(X,Y)\sim P_{X,Y}$ with transport maps that are constrained to be block-triangular $(x,y) \mapsto (T(x,y),y)$.

\subsubparagraph{Step 3.}
	Upon using the Kantorovich duality and the definition of $c$-concave function, 
	the Monge problem~(\ref{eq:Monge OT}) becomes 
	\begin{align}\label{eq:new_loss}
		\max_{ f \in c\text{-concave}_x}\,
		\min_{T \in \mathcal{M}(P_X \otimes P_Y)}\, J(f,T;P_{X,Y}),
	\end{align}
	where the objective function 
	\begin{equation*}
		J(f,T;P_{X,Y}):=\mathbb E_{(X,Y) \sim P_{X,Y}}[f(X,Y)] -  \mathbb E_{(\overline X,Y) \sim P_X \otimes P_Y} \left[f(T(\overline X,Y),Y)+  c(T(\overline X,Y),\overline X)\right].
	\end{equation*}

\medskip

A rigorous justification of the max-min formulation, in the standard OT setting, appears in \cite{al2023high}, where the following result is described concerning its solution (the result is based on \cite[Theorem~2.3]{carlier2016vector}).  
\begin{proposition} \label{prop:consistency}
	Assume $P_X$ is absolutely continuous with respect to the Lebesgue measure with a convex support set $\mathcal{X}$, $P_{X|Y}(\cdot|y)$ admits a density with respect to the Lebesgue measure 
	$\forall y$, and $\mathbb E[|X|^2]<\infty$. Then, there exists a unique pair $(\overline f,\overline T)$, modulo an additive constant for $\overline{f}$, that solves the optimization problem~(\ref{eq:new_loss}) and  the map $\overline T(\cdot,y)$ is the OT map from $P_X$ to $P_{X|Y}(\cdot|y)$ for a.e. $y$. 
\end{proposition}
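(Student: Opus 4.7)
The plan is to recognize the max-min problem \eqref{eq:new_loss} as the Kantorovich dual of the constrained Monge problem \eqref{eq:Monge OT}, and then solve the latter by applying Brenier's theorem fibre-by-fibre in $y$, exploiting the block-triangular structure $(x,y) \mapsto (T(x,y),y)$ to decouple the problem into a family of standard quadratic OT problems indexed by $y$.

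First I would perform the inner minimization in \eqref{eq:new_loss} pointwise. Since the constraint $T \in \mathcal{M}(P_X \otimes P_Y)$ imposes only measurability, for $P_X \otimes P_Y$-a.e.\ $(\overline x, y)$ the optimal $T(\overline x, y)$ attains
\[
\inf_{z \in \mathbb{R}^n}\bigl[f(z,y) + \tfrac{1}{2}|z - \overline x|^2\bigr],
\]
and a standard measurable selection theorem (e.g.\ Kuratowski--Ryll-Nardzewski) yields a jointly measurable minimizer $T_f(\overline x, y)$. Substituting back into $J$, the outer problem becomes
\[
\max_{f \in c\text{-concave}_x}\Bigl\{\mathbb{E}_{P_{X,Y}}[f(X,Y)] + \mathbb{E}_{P_X \otimes P_Y}\bigl[(-f)^c(\overline X, Y)\bigr]\Bigr\},
\]
where the $c$-transform is taken in the $x$-variable with $y$ held fixed. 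This is precisely the Kantorovich dual of \eqref{eq:Monge OT} viewed as a quadratic OT problem from $P_X \otimes P_Y$ to $P_{X,Y}$ with transport maps constrained to preserve $y$; under the stated moment and regularity hypotheses the sup over $c$-concave$_x$ potentials equals the Monge value.

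Next, because $T$ keeps $Y$ fixed and $(\overline X, Y)\sim P_X\otimes P_Y$, both \eqref{eq:T-constraint-joint} and the cost $\mathbb{E}[\tfrac{1}{2}|T(\overline X,Y)-\overline X|^2]$ disintegrate fibre-by-fibre: for $P_Y$-a.e.\ $y$, the map $T(\cdot,y)$ must transport $P_X$ to $P_{X|Y}(\cdot|y)$ while minimizing $\mathbb{E}_{\overline X\sim P_X}[\tfrac{1}{2}|T(\overline X,y)-\overline X|^2]$. Under the hypotheses of the proposition — $P_X$ absolutely continuous with convex support $\mathcal X$, $P_{X|Y}(\cdot|y)$ absolutely continuous for a.e.\ $y$, and $\mathbb{E}[|X|^2]<\infty$ — Brenier's theorem yields, for a.e.\ $y$, a unique OT map $\overline T(\cdot,y)=\nabla_x\overline u(\cdot,y)$ with $\overline u(\cdot,y)$ convex and unique up to an additive constant. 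Setting $\overline f(x,y):=\tfrac{1}{2}|x|^2-\overline u(x,y)$ then produces a $c$-concave$_x$ function whose induced minimizer $T_{\overline f}$ from the inner minimization coincides with $\overline T$, delivering the desired pair.

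The main obstacle will be assembling these fibrewise objects into globally $(x,y)$-measurable $\overline f$ and $\overline T$, and rigorously justifying that fibrewise Brenier optimality produces global optimality in the max-min rather than merely a lower bound — i.e.\ the interchange of supremum, infimum, and expectation. This measurability and the sup/min interchange are the technical content of~\cite[Theorem~2.3]{carlier2016vector} on which the statement is based, with the explicit adaptation to the form~\eqref{eq:new_loss} carried out in~\cite{al2023high}. Uniqueness of $(\overline f,\overline T)$ modulo the additive constant follows from fibrewise Brenier uniqueness: any competing pair would have to agree with $(\overline f,\overline T)$ for $P_Y$-a.e.\ $y$ and hence $P_X \otimes P_Y$-a.e.
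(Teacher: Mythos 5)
Your overall strategy --- read \eqref{eq:new_loss} as the Kantorovich semi-dual of the constrained Monge problem \eqref{eq:Monge OT}, reduce the inner minimization to a $c$-transform via measurable selection, disintegrate in $y$, apply Brenier's theorem fibre-by-fibre, and defer the measurability/interchange issues to \cite[Theorem~2.3]{carlier2016vector} --- is exactly the route of the cited proof; the paper itself offers no argument beyond pointing to \cite{al2023high} and that theorem. So at the level of structure your proposal is the intended one.

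There is, however, a concrete sign error in your first step that breaks the duality identification as written. The inner minimization should produce the $c$-transform $f^c(\overline x,y)=\inf_z\bigl[\tfrac12|z-\overline x|^2-f(z,y)\bigr]$, so that the reduced outer problem is the genuine semi-dual $\max_f\{\mathbb E_{P_{X,Y}}[f]+\mathbb E_{P_X\otimes P_Y}[f^c]\}$; this is the reading consistent with the empirical objective \eqref{eq:emperical_loss}, where the quadratic cost enters with a $+$ sign and $f(T(\cdot,\cdot),\cdot)$ with a $-$ sign (the displayed definition of $J$ below \eqref{eq:new_loss} is itself inconsistent with \eqref{eq:emperical_loss} on this point). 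You instead minimize $z\mapsto f(z,y)+\tfrac12|z-\overline x|^2$ and arrive at $\max_f\{\mathbb E[f]+\mathbb E[(-f)^c]\}$; the pair $(f,(-f)^c)$ is not feasible for the Kantorovich dual (one does not have $f(z,y)+(-f)^c(\overline x,y)\le c(z,\overline x)$), so the claim that this ``is precisely the Kantorovich dual'' fails. The same slip propagates to your last step: with the correct $c$-transform, the minimizer of $z\mapsto\tfrac12|z-\overline x|^2-f(z,y)$ is $\nabla_x u^*(\overline x,y)$ where $u=\tfrac12|x|^2-f$, so the optimal potential must be $\overline f(x,y)=\tfrac12|x|^2-\overline u^*(x,y)$ (Legendre conjugate in $x$ of the Brenier potential $\overline u$), not $\tfrac12|x|^2-\overline u(x,y)$; as you wrote it, $T_{\overline f}$ recovers the inverse map rather than $\overline T=\nabla_x\overline u$. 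Both issues are repaired by flipping the sign of $f$ inside the inner problem, after which the fibrewise Brenier argument and the uniqueness-modulo-constants claim go through as you describe.
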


The next proposition is concerned with the case where the max-min optimization problem~(\ref{eq:new_loss}) is not solved exactly.  For a pair $(f,T)$, the total optimality gap for the max-min problem is defined as
\begin{equation}\label{eq:opt-gaps}
\begin{aligned}
\epsilon(f,T;P_{X,Y}) := &J(f,T;P_{X,Y}) - \min_{S} J(f,S;P_{X,Y}) +\max_{g}\min_{S}J(g,S;P_{X,Y}) - \min_{S} J(f,S;P_{X,Y}) 
\end{aligned}
\end{equation}
We then have the following result (which is as an extension of \cite[Thm. 4.3]{rout2022generative} and \cite[Thm. 3.6]{makkuva2020optimal}).

\begin{proposition} \label{prop:map estimation error bound}
Consider the setting of Prop.~\ref{prop:consistency} with the optimal pair $(\overline f, \overline T)$. Let $(f,T)$ be a possibly non-optimal pair with an optimality gap $\epsilon(f, T;P_{X,Y})$.  Suppose $x \mapsto \frac{1}{2}|x|^2-f(x,y)$ is $\alpha$-strongly 
convex in $x$ for all $y$. Then,  
\begin{equation}\label{eq:OT-bound}
d(P_{X|Y}(\cdot|Y), T(\cdot,Y)_{\#} P_X) \leq \sqrt{\frac{4}{\alpha}\epsilon(f,T;P_{X,Y})}.
\end{equation}
\end{proposition}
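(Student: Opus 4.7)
The plan is to bound the squared $L^2$ map-error $\mathbb E|T(\overline X,Y)-\overline T(\overline X,Y)|^2$ by the optimality gap $\epsilon(f,T;P_{X,Y})$, and then to transfer this bound to the dual bounded-Lipschitz metric via the $1$-Lipschitz property of test functions in $\mathcal G$. The pivot is that under the assumption that $\tfrac12|x|^2-f(x,y)$ is $\alpha$-strongly convex in $x$ for every $y$, the inner objective $T\mapsto J(f,T;P_{X,Y})$ becomes (pointwise in $(\overline X,Y)$) $\alpha$-strongly convex in $T$: a formal second variation identifies its Hessian in $T$ with $I-\nabla^2_x f$, so the assumption propagates directly. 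Consequently the inner minimizer $T_f$, defined by $J(f,T_f;P_{X,Y})=\min_S J(f,S;P_{X,Y})$, is unique, and the quadratic-growth inequality
\[
J(f,S;P_{X,Y})-J(f,T_f;P_{X,Y})\;\geq\;\tfrac{\alpha}{2}\,\mathbb E\bigl|S(\overline X,Y)-T_f(\overline X,Y)\bigr|^2
\]
holds for every admissible $S$.

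First I would instantiate this inequality at $S=T$ and at $S=\overline T$ to obtain
\[
\mathbb E|T-T_f|^2\leq \tfrac{2}{\alpha}\bigl[J(f,T)-\min_S J(f,S)\bigr],\qquad \mathbb E|\overline T-T_f|^2\leq \tfrac{2}{\alpha}\bigl[J(f,\overline T)-\min_S J(f,S)\bigr].
\]
The first right-hand side is exactly $\tfrac{2}{\alpha}$ times the inner gap appearing in~\eqref{eq:opt-gaps}. For the second, I would invoke the saddle-point inequality $J(f,\overline T)\leq J(\overline f,\overline T)$, which follows from the minimax equality underlying the semi-dual program~\eqref{eq:new_loss} and is implicit in the proof of Prop.~\ref{prop:consistency}. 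Since $J(\overline f,\overline T)=\max_g\min_S J(g,S)$, this upgrades the second bound to $\tfrac{2}{\alpha}$ times the outer gap in~\eqref{eq:opt-gaps}. Adding the two estimates and using $|T-\overline T|^2\leq 2|T-T_f|^2+2|\overline T-T_f|^2$ yields
\[
\mathbb E\bigl|T(\overline X,Y)-\overline T(\overline X,Y)\bigr|^2\;\leq\;\tfrac{4}{\alpha}\,\epsilon(f,T;P_{X,Y}).
\]

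To finish, for every $g\in\mathcal G$ the $1$-Lipschitz property gives $|g(T(\overline X,Y))-g(\overline T(\overline X,Y))|\leq |T(\overline X,Y)-\overline T(\overline X,Y)|$; taking the conditional expectation given $Y$ and then applying conditional Jensen produces
\[
\mathbb E\bigl|T(\cdot,Y)_{\#}P_X(g)-\overline T(\cdot,Y)_{\#}P_X(g)\bigr|^2\;\leq\;\mathbb E|T-\overline T|^2.
\]
Using the identification $\overline T(\cdot,Y)_{\#}P_X=P_{X|Y}(\cdot|Y)$ from Prop.~\ref{prop:consistency}, taking a square root and then the supremum over $g\in\mathcal G$ delivers~\eqref{eq:OT-bound}. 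The main obstacle I expect to scrutinize is the saddle-point step $J(f,\overline T)\leq J(\overline f,\overline T)$: the automatic inequality $\max_g\min_S J(g,S)\geq J(f,T_f)$ is insufficient, and one genuinely needs the minimax identity from the Kantorovich semi-dual theory to identify $\overline f$ as the argmax of $f\mapsto J(f,\overline T)$. A minor observation is that only the $1$-Lipschitz half of the definition of $\mathcal G$ is used here; the boundedness $|g|\leq 1$ plays no role in this argument.
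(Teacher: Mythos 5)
Your argument is correct and is essentially the standard proof of this type of bound (the paper itself does not prove the proposition in-text but defers to the cited results of Makkuva et al.\ and Rout et al., whose argument is exactly the one you reconstruct: strong convexity of the inner objective gives quadratic growth around the pointwise minimizer $T_f$, the two gaps in~\eqref{eq:opt-gaps} control $\mathbb E|T-T_f|^2$ and $\mathbb E|\overline T-T_f|^2$ respectively, and the $1$-Lipschitz property of $g\in\mathcal G$ together with conditional Jensen transfers $\mathbb E|T-\overline T|^2\leq \tfrac{4}{\alpha}\epsilon$ to the metric $d$). The one step you flag as delicate, $J(f,\overline T)\leq J(\overline f,\overline T)$, is actually simpler than you fear and does not require any minimax identity: since $\overline T$ satisfies the constraint~\eqref{eq:T-constraint-joint}, we have $(\overline T(\overline X,Y),Y)\sim P_{X,Y}$, hence $\mathbb E[f(\overline T(\overline X,Y),Y)]=\mathbb E[f(X,Y)]$ and the $f$-dependent terms in $J(f,\overline T)$ cancel for \emph{every} $f$, giving $J(f,\overline T)=\mathbb E[c(\overline T(\overline X,Y),\overline X)]=J(\overline f,\overline T)=\max_g\min_S J(g,S)$ identically. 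With that substitution your two quadratic-growth estimates combine via $|T-\overline T|^2\leq 2|T-T_f|^2+2|\overline T-T_f|^2$ to give precisely the constant $4/\alpha$ in~\eqref{eq:OT-bound}, and your closing observation that only the Lipschitz half of $\mathcal G$ is used is also accurate.
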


\begin{remark}
	The OT upper-bound~\eqref{eq:OT-bound} depends on the optimality gap 
	$\epsilon(  f,  T)$ which, in principle, decomposes to a bias and variance term. The bias term corresponds to the representation power of the function classes $\mathcal{F}$ and $\mathcal T$, in comparison with the complexity of the problem. The variance term 
	corresponds to the statistical generalization errors due to  the empirical approximation of the objective function. The variance term is expected to grow as $O(\frac{1}{\sqrt{N}})$ with a proportionality constant that depends on the complexity of the function classes, but independent of the dimension.  In principle, the OT approach may also suffer from the COD under no additional assumptions on the problem. However, in comparison to SIR, it provides a more flexible design methodology that  can exploit problem specific structure and regularity.
\end{remark}
\subsection{Numerical approximation of the OT map}

The max-min problem structure is appropriate in this age of ML where neural network architectures for function approximation and optimization frameworks for the same are readily available.  The data for the same is obtained from sampling as follows:
\begin{align*}
  \text{(sample)} & \qquad (X^i,Y^i) \overset{\text{i.i.d}}{\sim} P_{X,Y} \quad i=1,2,\hdots, N\\
  \text{(shuffle)} & \qquad {\overline X}^i = X^{\sigma_i} \quad i=2,\hdots,n
\end{align*}
where $\sigma: \{1,\ldots,N\} \to \{1,\ldots,N\}$ is a random shuffling.  Because of shuffling $({\overline X}^i,Y^i)\sim P_X \otimes P_Y$. 

Using these samples, the objective function $J(f,T)$ is approximated empirically as
\begin{equation}\label{eq:emperical_loss}
	\begin{aligned}
		&J(f,T;\frac{1}{N}\sum_{i=1}^N \delta_{(X^i,Y^i)}):= \frac{1}{N} \sum_{i=1}^N \big[f(X^i,Y^i) + \frac{1}{2}| T( \overline X^i,Y^i)- X^i|^2 - f(T(\overline X^i,Y^i),Y^i)\big]
	\end{aligned}
\end{equation}

The function $f$ and the map $T$ are represented with a parametric class of functions, denoted by $\mathcal F$ and $\mathcal T$, respectively. Here we take these to be neural network classes with architectures 
that are summarized in Fig.~\ref{tikz:static_struc}; further details 
about these architectures can be found in 
the \cite[Appendix C]{al2023high}.  The max-min optimization problem becomes
\begin{equation}\label{eq:empirical-optimization}
	\max_{f\in \mathcal F}\,\min_{T \in \mathcal T}\, J(f,T;\frac{1}{N}\sum_{i=1}^N \delta_{(X^i,Y^i)}). 
\end{equation}

\begin{remark}\label{remark:TH}
	Note that our choice of $\mathcal F$ does not impose the constraint that $f(x,y)$ 
	is $c\text{-concave}_x$. We make this choice due to the practical limitations of imposing convexity constraints on 
	neural nets using, e.g., input-convex networks \cite{amos2016input,bunne2022supervised}. However, if the computed 
	$f$ happens to be $c\text{-concave}_x$ (which one can check a posteriori) then Prop.~\ref{prop:map estimation error bound} remains applicable. 
\end{remark}

\begin{remark}
	The proposed computational procedure may be extended to the Riemannian manifold setting by using the square of the geodesic distance as the cost function $c$ and modeling the map $T$ as exponential of a parameterized vector-field; see~\cite{grange2023computational}.
\end{remark}

\begin{figure}[t]
	\scriptsize
	\centering 
	\begin{tikzpicture}[node distance=1cm, auto, scale = 0.8]
		\node (input) [startstop,minimum width=1cm, minimum height=0.5cm,rotate=270] at (0,0) {$X$, $Y$ \par};
		
		
		\node (x) [input] at (-3,-0.45) {};
		\node (y) [input] at (-3,0.45) {};     
		
		\node (resnet) [process,minimum width=1cm, minimum height=1cm, text width=1cm] at (1.75,0) {ResNet Block \par};
		
		\node (f) [process,minimum width=1cm] at (4.75,0){$f(X,Y)$ \par };
		\draw[arrow] (x) -- node [above]{$X$}($(input.270) + (0,-0.45)$);
		\draw[arrow] (y) -- node [above]{$Y$}($(input.270) + (0,0.45)$);
		\draw[arrow] (input) -- (resnet);
		\draw[arrow] (resnet) -- (f);

		
	\end{tikzpicture}
	
	
	\begin{tikzpicture}[node distance=1cm, auto, scale=0.8]
		\node (input) [startstop,minimum width=1cm, minimum height=0.5cm,rotate=270] at (0,0) {$X$ , $Y$ \par};
		\node (enkf) [process,fill=red!10, rotate=270, text width=0.85cm, minimum width=1cm] at (-1.75,-0.5){EnKF Block\par};
		
		\node (x) [input] at (-3,-0.48) {};
		\node (y) [input] at (-3,0.45) {};        
		
		\node (resnet) [process,minimum width=1cm, minimum height=1cm, text width=1cm] at (1.75,0) {ResNet Block \par};
		
		\node (sum) [sum] at (3.25,0){$+$ \par };
		\node (T) [process,minimum width=1cm] at (4.75,0){$T(X,Y)$ \par };
		\draw[arrow] (x) -- node {$X$} (enkf);
		\draw[arrow] (y) -| node [above]{$Y$} (enkf);
		\draw[arrow] (y) -- ($(input.270) + (0,0.45)$);
		\draw[arrow] (enkf) -- ($(input.270) + (0,-0.48)$);
		\draw[arrow] (input) -- (resnet);
		\draw[arrow] (resnet) -- (sum);
		\draw[arrow] (sum) -- (T);
		
		\draw[arrow] (-0.75,-0.48) |- (3.25,-1) -| ($(sum.270)$);
		
	\end{tikzpicture}
	\caption{Neural net architectures for the function classes  $\mathcal F$ 
		and $\mathcal T$ within our proposed algorithm.}
	\label{tikz:static_struc}
\end{figure}
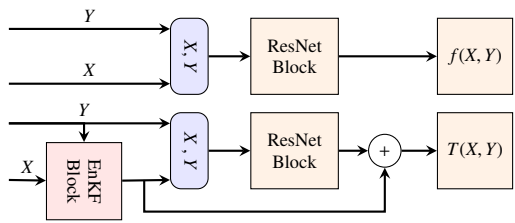

\subsection{Numerical demonstration}\label{sec:Static_Example}
We present a  numerical experiment to  demonstrate the performance of the OT approach in comparison with the EnKF and SIR algorithms. 
The OT algorithm 
consists of solving~\eqref{eq:empirical-optimization} with $f, T$ expressed using neural nets. The network weights are learned with a gradient ascent-descent procedure using the Adam optimization algorithm.  The details of the algorithm appear in~\cite{al2023high}. 

 Specifically, we consider the task of computing the conditional distribution of a Gaussian hidden random variable $X \sim N(0,I_n)$ given the observation
\begin{align} \label{example:squared}
	Y=\frac{1}{2}X\odot X + \lambda_w W, \qquad W \sim N(0,I_n)
\end{align}
where $\odot$ denotes the element-wise (i.e., Hadamard) product. This model is specifically selected to produce a bimodal posterior.  We only present the  $n=2$ case since the difference between OT and SIR was not significant when $n=1$. 

The first numerical results for this model are presented with a noise 
standard deviation of $\lambda_w=0.4$
in Fig.~\ref{fig:squared_not_SNR}. The top left panel shows the initial particles as samples from the Gaussian prior distribution. 
The bottom left panel shows the pushforward of samples from 
$P_X \otimes P_Y$ via 
the block triangular map $S(x,y)=(T(x,y),y)$, in comparison to samples from $P_{XY}$, verifying the consistency condition (\ref{eq:T-constraint-joint}) 
for the map. Then, we pick a particular value for the observation $Y=1$ (as shown by the dashed line) and present the histogram of (transported) particles in comparison with the exact conditional density. It is observed that both OT and SIR capture the bimodal posterior, while EnKF falls short since it always approximates the posterior with a Gaussian.

We repeat the procedure in
Fig.~\ref{fig:squared_SNR} but for a smaller noise standard deviation $\lambda_w=0.04$ which leads to a more degenerate posterior. Our results 
clearly demonstrate the  weight degeneracy of SIR even in this low-dimensional setting, as all particles collapse into a single mode, while the OT approach still captures the bimodal posterior. 

\begin{figure*}[t]
	\centering
	\begin{subfigure}{0.48\textwidth}
		\centering
		\includegraphics[width=0.95\textwidth, ,trim={70 20 90 40},clip]{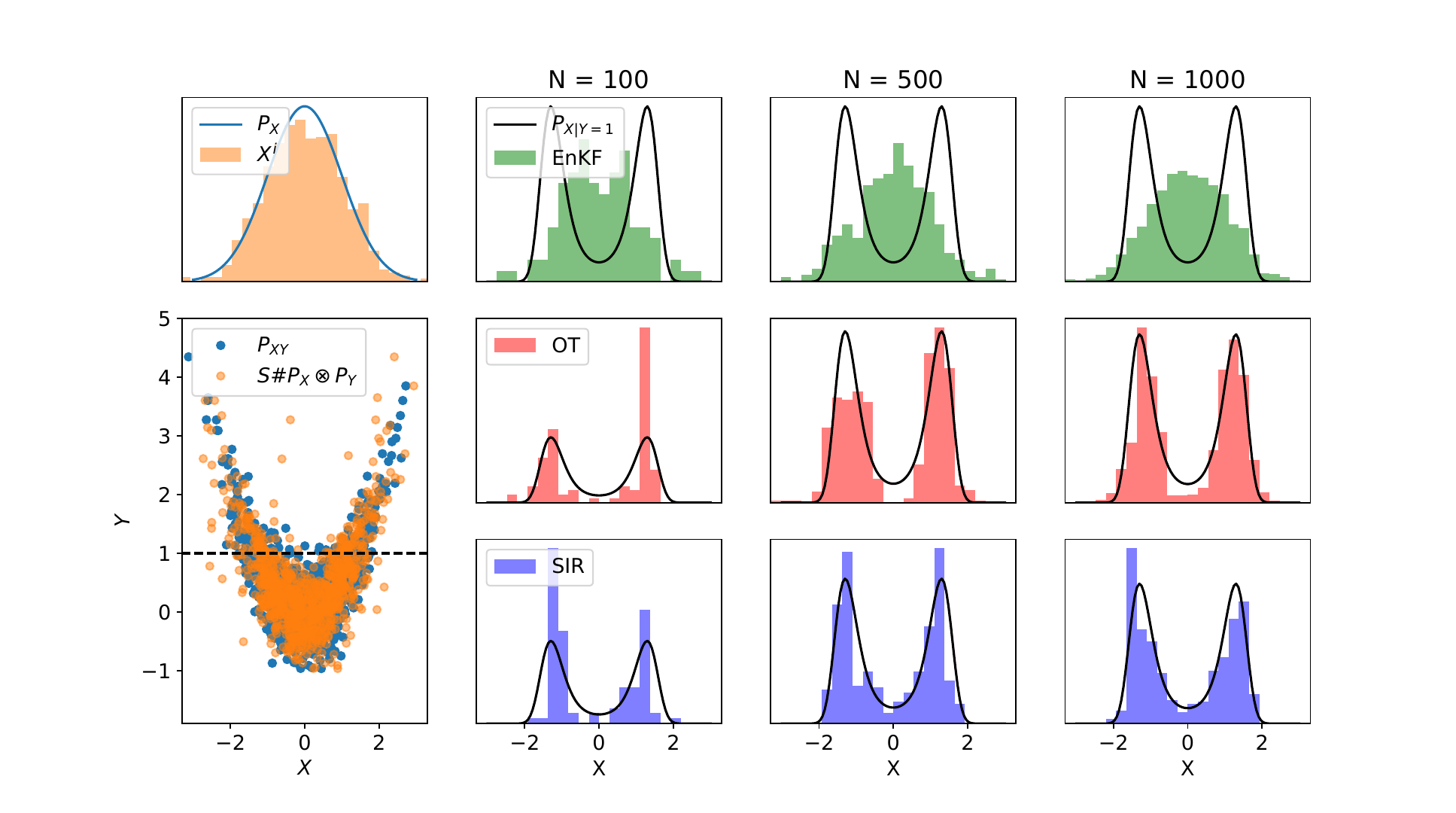}
		\caption{$\lambda_w=0.4$.}
		\label{fig:squared_not_SNR}
	\end{subfigure}
	\hfill
	\begin{subfigure}{0.5\textwidth}
		\centering
		\includegraphics[width=0.95\textwidth,trim={70 20 90 40},clip]{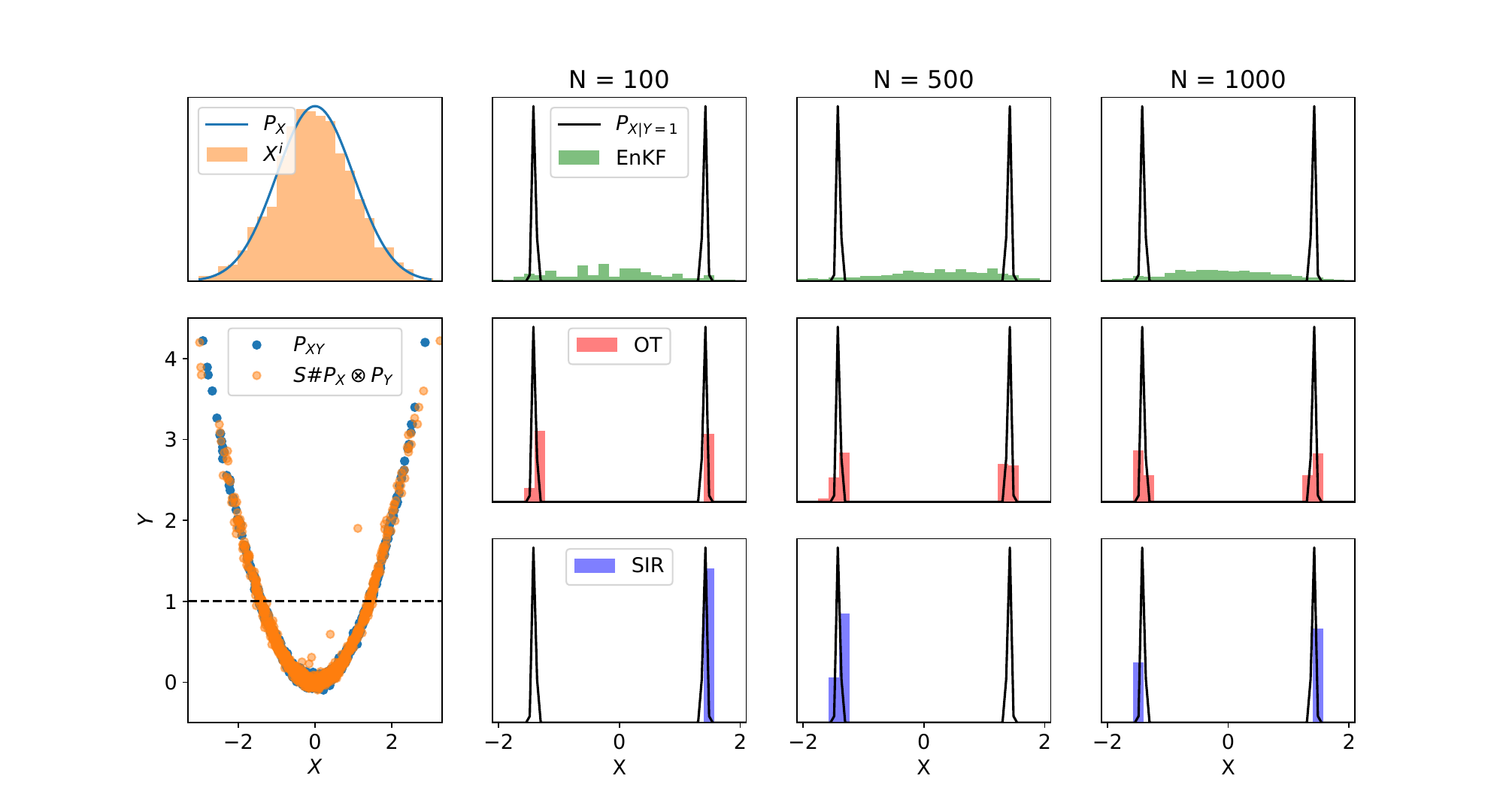} 
		\caption{$\lambda_w=0.04$.}
		\label{fig:squared_SNR}
	\end{subfigure}
	\caption{Numerical results for the static example in Sec.~\ref{sec:Static_Example}. (a) top-left: Samples $\{X^i\}_{i=1}^N$ from the prior $P_X$; bottom-left: samples $\{(X^i,Y^i)\}_{i=1}^N$ from the joint distribution $P_{X,Y}$ in comparison with the transported samples $\{(T(X^{\sigma_i},Y^i),Y^i)\}_{i=1}^N$; rest of the panels: transported samples for $Y=1$ for different values of $N$ and three different algorithms. (b) Similar results to panel (a) but  for a smaller $\lambda_w$.}
	\label{fig:squared}
\end{figure*}

\subsection{Relationship to the FPF Bayes update}\label{sec:FPF-OT}

	In this section, we present an alternative approach for derivation of the vector-fields $\k_t$ and $u_t$ to implement the FPF Bayes update formula.  The derivation is based on an application of the max-min optimization problem~\eqref{eq:new_loss}. 
	 For simplicity, the procedure is explained for identifying the vector-fields only at $t=0$. 
	To that end, consider a time discretization of
	the observation process~\eqref{eq:model-obs-cont} according to
	\begin{equation}\label{eq:obs-FPF}
		\Delta Y:= Y_{\Delta t} -Y_{0}= h(X_0) \Delta t+  W_{\Delta t}.
	\end{equation}
        where $X_0\sim \pi_0$ where it is assumed that $\pi_0$ admits a pdf denoted as $p_0$.  
	The state $X$ and the observation $\Delta Y$ are used to define the max-min problem~\eqref{eq:new_loss}.  
	In the limit of $\Delta t \to 0$, its solution is assumed to be of the form
	\begin{equation}\label{eq:f-FPF}
		f(x; y
		) = \phi(x) y+ \psi(x)\Delta t,\quad T(x,y) = x + \k(x)y + u(x)\Delta t
	\end{equation}
	where $\phi:\mathbb R^n \to \mathbb R$, $\psi:\mathbb R^n \to \mathbb R$, $u:\mathbb R^n \to \mathbb R^n$, and $\k:\mathbb R^n \to \mathbb R^n$ are functions that need to be determined.  
	The following proposition identifies the first-order and second-order approximation of the objective function in the asymptotic  limit as $\Delta t \to 0$. 
	
	\begin{proposition}\label{prop:FPF}
		Consider the objective function~\eqref{eq:new_loss}, with observation model~\eqref{eq:obs-FPF} and $f$ and $T$ specified as~\eqref{eq:f-FPF}. Then, in the asymptotic  limit as $\Delta t \to 0$, 
		\begin{align}\label{eq:J-FPF}
		J(f,T,P_{X,\Delta Y})&= J_1(\phi,\k)  \Delta t + J_2(\phi,\psi,\k,u)\Delta t^2 + O(\Delta t^3) 
	\end{align}
where
\begin{align*}
	J_1(\phi,\k) &:= \mathbb E[\frac{1}{2}|\k(X) - \nabla \phi(X)|^2 - \frac{1}{2}|\nabla \phi(X)|^2  + \phi(X)(h(X)-\hat h_0)] \\
	J_2(\phi,\psi,\k,u) &:= \mathbb E[\frac{1}{2} |u(X) |^2-\nabla \psi(X)u(X)  + u(X)^\top (\k(X)-\nabla \phi(X))\hat h_0 -\nabla \psi(X)^\top \k(X)\hat h_0 - \frac{1}{2}\k(X)^\top \nabla^2 \psi(X)\k(X) - \frac{3}{2} \k(X)^\top \nabla^2 \phi(X) \k(X)\hat h_0 ]
\end{align*}
where $\hat{h}_0 = \mathbb E[h(X)] = \int h(x) p_0(x) \ud x$. 
	\end{proposition}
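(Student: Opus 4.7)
The plan is to substitute the ansatz~\eqref{eq:f-FPF} directly into the objective $J(f, T; P_{X, \Delta Y})$ and Taylor expand in $\Delta t$, treating $\Delta Y = h(X_0)\Delta t + W_{\Delta t}$ as an $O_p(\sqrt{\Delta t})$ quantity, then collect terms order by order. No subtle analytical input is needed; the challenge is bookkeeping.

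I would first tabulate the marginal moments of $\Delta Y$ that drive the expansion. Since $W_{\Delta t} \sim N(0, \Delta t)$ is independent of $X_0$, direct computation yields
\begin{equation*}
  \mathbb{E}[\Delta Y] = \hat h_0 \Delta t, \quad \mathbb{E}[(\Delta Y)^2] = \Delta t + O(\Delta t^2), \quad \mathbb{E}[(\Delta Y)^3] = 3\hat h_0 \Delta t^2 + O(\Delta t^3), \quad \mathbb{E}[(\Delta Y)^4] = 3\Delta t^2 + O(\Delta t^3).
\end{equation*}
The diagonal piece $\mathbb{E}_{P_{X,\Delta Y}}[f(X, \Delta Y)]$ then reduces to $(\mathbb{E}[\phi h] + \mathbb{E}[\psi])\Delta t$ by independence of $W_{\Delta t}$ and $X$. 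For the product-measure piece $\mathbb{E}_{P_X \otimes P_{\Delta Y}}[f(T, \Delta Y) + c(T, \overline X)]$, I would use $\overline X \perp \Delta Y$ to factorize expectations, then Taylor expand $\phi(T)$ and $\psi(T)$ about $\overline X$ in the increment $\Delta T := T - \overline X = \k(\overline X)\Delta Y + u(\overline X)\Delta t$. Since $\phi(T)$ is multiplied by an additional factor of $\Delta Y$, one must retain $\phi$-Taylor terms up to cubic order, while a quadratic expansion of $\psi$ suffices. The cost $c(T, \overline X) = \tfrac{1}{2}|\Delta T|^2$ expands exactly.

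Collecting the $\Delta t^1$ coefficients after subtraction, only the contributions from $\mathbb{E}[\phi(\overline X)\Delta Y]$, $\mathbb{E}[\nabla \phi^\top \k\, (\Delta Y)^2]$, $\mathbb{E}[\psi]\Delta t$, and $\tfrac{1}{2}\mathbb{E}[|\k|^2]\mathbb{E}[(\Delta Y)^2]$ survive, and completing the square reproduces $J_1(\phi, \k) = \mathbb{E}[\tfrac{1}{2}|\k - \nabla\phi|^2 - \tfrac{1}{2}|\nabla\phi|^2 + \phi(h-\hat h_0)]$. At order $\Delta t^2$ the surviving contributions are the cross terms $\hat h_0 \mathbb{E}[\nabla \phi^\top u]$ and $\hat h_0 \mathbb{E}[\nabla \psi^\top \k]$ arising from $\nabla \phi^\top u\,\Delta Y \Delta t$ and $\nabla \psi^\top \k\,\Delta Y \Delta t$; the coefficient $\tfrac{3}{2}\hat h_0 \mathbb{E}[\k^\top \nabla^2 \phi\,\k]$ produced by pairing $\tfrac{1}{2}\k^\top \nabla^2 \phi\,\k\,(\Delta Y)^3$ with $\mathbb{E}[(\Delta Y)^3] = 3\hat h_0 \Delta t^2$; the analogous $\tfrac{1}{2}\mathbb{E}[\k^\top \nabla^2 \psi\,\k]$ from the quadratic $\psi$-term paired with $\mathbb{E}[(\Delta Y)^2]\Delta t$; and the cost contributions $\hat h_0 \mathbb{E}[\k^\top u]$ and $\tfrac{1}{2}\mathbb{E}[|u|^2]$. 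Assembled with the correct signs these reproduce $J_2(\phi, \psi, \k, u)$. The main obstacle is verifying that the residual nominally $O(\Delta t^2)$ contributions---principally the cubic Taylor term $\tfrac{1}{6} D^3\phi(\k,\k,\k)(\Delta Y)^4$ paired with $\mathbb{E}[(\Delta Y)^4]=3\Delta t^2$, together with the $\mathbb{E}[h^2]$-correction in $\mathbb{E}[(\Delta Y)^2]$---are either absorbed by an integration-by-parts identity against the prior density $p_0$ or fall into the $O(\Delta t^3)$ remainder under the smoothness assumed on $\phi, \psi, \k, u$; once these are dispensed with, \eqref{eq:J-FPF} follows.
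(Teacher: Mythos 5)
Your proposal follows essentially the same route as the paper's proof: substitute the ansatz \eqref{eq:f-FPF} into the three terms of $J$, Taylor expand $\phi(T)$ and $\psi(T)$ about $\overline X$, and evaluate the moments of $\Delta Y = h(X)\Delta t + \Delta W$ order by order. The residual second-order contributions you flag at the end (the $(\Delta Y)^4$ term from the cubic Taylor remainder and the $\mathbb{E}[h^2]\Delta t^2$ correction to $\mathbb{E}[(\Delta Y)^2]$) are real but are simply dropped without comment in the paper's own expansion, so on this point your bookkeeping is, if anything, more careful than the printed proof.
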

\begin{proof}
See Appendix. 
\end{proof}

	The expansion of the objective function in~\eqref{eq:J-FPF} suggests that, in the limit as $\Delta t \to 0$,  the function $\phi$ and the vector-field $\k$ are obtained by solving the max-min problem
	\begin{align*}
		\max_\phi\min_\k\,J_1(\phi,\k)
	\end{align*}
Minimizing over  $\k(x)$, for a fixed function $\phi(x)$,   yields the solution $\k(x) = \nabla \phi(x)$, concluding the following maximization over $\phi$:
\begin{align*}
	\max_\phi\, \mathbb E[-\frac{1}{2}|\nabla \phi(X)|^2 + \phi(X)(h(X)-\hat{h}_0)] 
\end{align*}
The first-order optimality condition for $\phi$ concludes the Poisson equation 
\begin{equation}
	-\frac{1}{p_0(x)} \nabla \cdot (p_0(x)\nabla \phi(x)) = (h(x) - \hat{h}_0),\quad \forall x \in \Re^n,
\end{equation}
As a result, we recover the choice~\eqref{eq:K-u} made in the FPF algorithm.

The functions $\psi$ and $u$ are obtained by solving the max-min problem  for the second-order term $J_2$ in the expansion of the objective function. In particular, assuming the optimal form of the functions $\phi$ and $\k$, we have the max-min problem 
 	\begin{align*}
 	\max_\psi\min_u\, \mathbb E[\frac{1}{2}|u(X)|^2-\nabla \psi(X)u(X)  -\nabla \psi(X)^\top \nabla \phi(X)\hat h_0 - \frac{1}{2}\nabla \phi(X)^\top \nabla^2 \psi(X)\nabla \phi(X)  ]
 \end{align*}
 Minimizing over  $u(x)$, for a fixed function $\psi(x)$,   yields the solution $u(x) = \nabla \psi(x)$, concluding the following maximization over $\psi$:
 	\begin{align*}
	\max_\psi\, \mathbb E[-\frac{1}{2}|\nabla \psi(X)|^2 -\nabla \psi(X)^\top \nabla \phi(X)\hat h_0 - \frac{1}{2}\nabla \phi(X)^\top \nabla^2 \psi(X)\nabla \phi(X)  ]. 
\end{align*}
Upon using integration by parts for the last term, and the fact that $\phi$ solves the Poisson equation, the maximization over $\psi$ is expressed as
 	\begin{align*}
	\max_\psi\, \mathbb E[-\frac{1}{2} |\nabla \psi(X)|^2  + \nabla \psi(X)^\top v(X) ]
\end{align*}
where $v(x)= -\frac{h(x)+\hat{h}_0}{2} \nabla \phi(x) + \frac{1}{2}\nabla^2 \phi(x) \nabla \phi(x)$. 
The maximizer takes the form $\nabla \psi = v + \xi$
where $\xi$ is a divergence-free vector-field , i.e. $\nabla \cdot(p(x)\xi(x))=0$. 
	This is in agreement with the optimal transport form of the FPF algorithm proposed in~\citep{taghvaei2021optimal} and the original form of the FPF in~\citep{yang2016} modulo the additional divergence-free term $\xi$.  As explained in~\citep[Sidebar on pp.~40-41]{taghvaei2021optimal}, the addition of this term does not affect the evolution of the density.  Therefore, it may be chosen to be zero which is the choice made in the FPF.

\section{Optimal Transport Filter}\label{sec:OT_PF}

Implemented in a recursive manner, the max-min optimization formulation yields the OT filter as described in this section.

\subsection{OT filter design}
The OT filter is designed in three steps. 
\subsubparagraph{Step1: Exact mean-field process.}
We use the OT characterization of the conditional distribution to construct a (exact) mean-field process $\bar X_t$ whose distribution $\bar \pi_t$ is exactly equal to the posterior distribution $\pi_t$. The update equations for $\bar X_t$ are given by  
	\begin{equation}\label{eq:Xbar}
		\begin{aligned}
			\bar{X}_{t+1|t} & \sim a(\cdot|\bar X_t) ,\quad \bar X_{t+1} = \bar T_t(\bar X_{t+1|t},Y_{t+1})  \\
			\bar T_t &\leftarrow 	\max_{ f \in c\text{-concave}_x}\,
			\min_{T}\, J(f,T;P_{\bar X_{t+1|t}, \bar Y_{t+1|t}}),
		\end{aligned}
	\end{equation}
where $\bar Y_{t+1|t} \sim h(\cdot|\bar X_{t+1|t})$. It is then straightforward to verify that
\begin{equation}
	\begin{aligned}
		\bar \pi_{t+1} &= \bar T_t(\cdot,Y_{t+1})_{\#} \mathcal A  \bar \pi_{t}
		= \mathcal B_y (\mathcal A \bar \pi_{t}),
	\end{aligned}
	\label{eq:exact-mean-field-dist}
\end{equation}
where the second {identity} is a consequence of Proposition~\ref{prop:consistency}. It then follows that
whenever $\bar \pi_0 = \pi_0$ then $\bar \pi_t = \pi_t$. As such, the mean-field process $\bar X_t$ is called exact.
The OT filter is obtained by approximating the exact mean-field process $\bar X_t$.

\subsubparagraph{Step 2: Approximate mean-field process.}
This step consists of restricting the feasible set of the optimization problem to a parameterized class of functions $\mathcal F$ and $\mathcal T$.
The resulting approximated process, denoted by $\tilde X_t$, follows the update rule:
	\begin{equation}
		\begin{aligned}
				\tilde{X}_{t+1|t} & \sim a(\cdot|\tilde X_t) ,\quad \tilde X_{t+1} = \tilde T_t(\tilde X_{t+1|t},Y_{t+1})
					\\
			\tilde T_t &\leftarrow 	\max_{ f \in \mathcal F}\,
			\min_{T \in \mathcal T}\, J(f,T;P_{\tilde X_{t+1|t}, \tilde Y_{t+1|t}}),
		\end{aligned}
		\label{eq:approx-mean-field-dist}
	\end{equation}
where $\tilde Y_{t+1|t} \sim h(\cdot|\tilde X_{t+1|t})$.
	This update defines the approximate mean-field distribution $\tilde \pi_t$ that follows  the update 
	\begin{align}\label{eq:XF}
\tilde \pi_{t+1|t} = \mathcal A \tilde \pi_t,\quad \tilde \pi_{t+1} =   \tilde T_t (\cdot,Y_{t+1})_{\#}\tilde \pi_{t}
	\end{align}

\subsubparagraph{Step 3: Finite particle system.}
The second approximation step is to replace the mean-field process with an empirical distribution of a collection of
particles $\{X^1_t,\ldots,X^N_t\}$, i.e.,  $\tilde \pi_t \approx \frac{1}{N}\sum_{i=1}^N \delta_{X_t^i}.$ It follows by discretization of the update equation~\eqref{eq:XF} for the mean-field process $\tilde X_t$ according to
\begin{equation}\label{eq:particles-interacting}
	\begin{aligned}
				{X}^i_{t+1|t} & \sim a(\cdot|X^i_t) ,\quad  X^i_{t+1} = \hat T_t(X^i_{t+1|t},Y_{t+1})
\\
\hat T_t & \leftarrow 	\max_{ f \in \mathcal F}\,
\min_{T \in \mathcal T}\,  J(f,T; \frac{1}{N}\sum_{i=1}^N \delta_{(X^i_{t+1|t},Y^i_{t+1|t})}),
	\end{aligned}
\end{equation}
where  $Y^i_{t+1|t} \sim h(\cdot|X^i_{t+1|t})$. 

\subsection{Error analysis}

For error analysis of the filter, the following assumption is needed.

\begin{definition}[Uniformly geometrically  stable filter]
	The filter update~\eqref{eq:exact-posterior} 
	is uniformly geometrically stable 
	if $\exists \lambda\in (0,1)$ and positive constant $C>0$ such that for all $\mu,\nu$ and $ t>s\geq 0$ it holds that
	\begin{equation}\label{eq:stability}
		d(\mathcal T_{t,s}\mu,\mathcal T_{t,s}\nu) \leq C(1-\lambda)^{t-s}d(\mu,\nu).
	\end{equation}
	\label{def:stability}
\end{definition}
\medskip

The distance between the exact mean-field distribution $\bar \pi_t$ and the approximate mean-field distribution $\tilde \pi_t$ is characterized in the following proposition.

\begin{proposition}\label{prop:mean-field}
	Consider $\bar \pi_t$ and $\tilde \pi_t$ as in~\eqref{eq:exact-mean-field-dist}-\eqref{eq:XF}{, respectively}. Assume 
	\begin{enumerate}
		\item The exact filter is stable according to Definition~\ref{def:stability}. 
		\item There  max-min optimality gap is uniformly bounded by  $\epsilon>0$. 
		\item 	For all $y$ and $t$, the function  $x\mapsto \frac{1}{2}|x|^2 - \tilde f_t( x,y)$ is $\alpha$-strongly convex 	\end{enumerate}
	Then, it holds that
	\begin{align}\label{eq:pitF-bound}
		d(\tilde \pi_t,\pi_t) \leq \frac{C}{\lambda}\sqrt{\frac{4\epsilon}{\alpha}},\quad \forall t,
	\end{align}
	with all constants independent of time. 
\end{proposition}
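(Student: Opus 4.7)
The plan is to carry out the standard two-ingredient accumulation argument for recursive filters: decompose the total error into one-step errors via a telescoping sum, control each one-step error with Proposition~\ref{prop:map estimation error bound}, and use the uniform geometric stability of the exact filter to damp the contribution of older errors into a convergent geometric series.

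For the telescoping step, I would define the hybrid (random) distributions
\[
\mu_s := \mathcal T_{t,s}(\tilde\pi_s),\qquad s=0,1,\ldots,t,
\]
so that $\mu_0 = \mathcal T_{t,0}(\pi_0) = \pi_t$ (since the approximate filter is initialized at $\tilde\pi_0=\pi_0$) and $\mu_t = \tilde\pi_t$. Writing $\mu_s = \mathcal T_{t,s+1}\bigl(\mathcal B_{Y_{s+1}}\mathcal A \tilde\pi_s\bigr)$ and $\mu_{s+1} = \mathcal T_{t,s+1}(\tilde\pi_{s+1})$, the triangle inequality for $d$ plus the stability bound~\eqref{eq:stability} applied to the pair $(\tilde\pi_{s+1}, \mathcal B_{Y_{s+1}}\mathcal A \tilde\pi_s)$ gives
\[
d(\tilde\pi_t,\pi_t) \leq \sum_{s=0}^{t-1} d(\mu_{s+1},\mu_s) \leq C\sum_{s=0}^{t-1}(1-\lambda)^{t-s-1}\, d\bigl(\tilde\pi_{s+1},\,\mathcal B_{Y_{s+1}}\mathcal A\tilde\pi_s\bigr).
\]

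For the one-step bound, I would observe that $\tilde\pi_{s+1} = \tilde T_s(\cdot,Y_{s+1})_{\#}\mathcal A\tilde\pi_s$ where $\tilde T_s$ is obtained from the parametric max-min~\eqref{eq:approx-mean-field-dist} applied to $P_{\tilde X_{s+1|s},\tilde Y_{s+1|s}}$, while $\mathcal B_{Y_{s+1}}\mathcal A\tilde\pi_s$ is exactly the conditional distribution that Proposition~\ref{prop:consistency} identifies with the OT pushforward. Since the optimality gap of $(\tilde f_s,\tilde T_s)$ is bounded by $\epsilon$ by assumption, and since $x\mapsto \tfrac12|x|^2-\tilde f_s(x,y)$ is $\alpha$-strongly convex uniformly in $(s,y)$, Proposition~\ref{prop:map estimation error bound} yields
\[
d\bigl(\tilde\pi_{s+1},\,\mathcal B_{Y_{s+1}}\mathcal A\tilde\pi_s\bigr)\leq \sqrt{\tfrac{4\epsilon}{\alpha}}.
\]
Plugging in and summing the geometric series $\sum_{s=0}^{t-1}(1-\lambda)^{t-s-1}\leq \sum_{k=0}^\infty(1-\lambda)^k = 1/\lambda$ produces the uniform-in-time bound $\tfrac{C}{\lambda}\sqrt{4\epsilon/\alpha}$ claimed in~\eqref{eq:pitF-bound}.

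The main obstacle is measure-theoretic rather than computational: all three objects $\tilde\pi_s$, $\mu_s$, and the maps $\tilde T_s$ are random, being functionals of the observation history $Y_1,\ldots,Y_s$, while both the stability definition and Proposition~\ref{prop:map estimation error bound} are phrased through the metric $d(\cdot,\cdot)=\sup_{g\in\mathcal G}\sqrt{\mathbb E|\mu(g)-\nu(g)|^2}$, which already averages over randomness. One must therefore interpret the stability contraction~\eqref{eq:stability} so that it applies to random measures (either pathwise in the observation $\omega$, or in the appropriate $L^2$ sense), and apply Proposition~\ref{prop:map estimation error bound} conditionally on $\mathcal Y_s$ before using the tower property of conditional expectation to assemble the unconditional bound. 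Once these conditioning issues are handled, the triangle inequality for $d$ and the geometric-series collapse are routine.
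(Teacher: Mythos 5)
Your proposal is correct and follows essentially the same route as the paper, which defers the proof to \cite[Prop.~2]{al2023optimal}: there the argument is exactly this telescoping decomposition along hybrid measures $\mathcal T_{t,s}(\tilde\pi_s)$, with the one-step error controlled by Prop.~\ref{prop:map estimation error bound} and the geometric stability of Definition~\ref{def:stability} summing to the $C/\lambda$ factor. Your closing remark about interpreting the contraction for random (observation-dependent) measures is the right caveat, and is consistent with the paper's definition of $d$ for possibly random measures.
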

\begin{proof}The proof appears in~\cite[Prop. 2]{al2023optimal}.
\end{proof}

\begin{remark}[Relationship to literature] \label{rem:stability}
	The uniform geometric stability property~\eqref{eq:stability} is also used in the error analysis of PFs in~\cite{del2001stability,delmoralbook}. It can be verified if the dynamic transition kernel satisfies a minorization condition,
	i.e., there exists a probability measure $\rho$ 
	and a constant $\epsilon >0$ such that  $a(x|x')\geq \epsilon\rho(x)$.
	The minorization is a mixing condition that ensures geometric ergodicity of the Markov process $X_t$~\cite{meyn2012markov}. We acknowledge that this condition is  {strong} and can be verified for a restricted class of systems, e.g., $X_t$ should belong to a compact set. A complete characterization of  systems with uniform geometric stable filters is an open and challenging problem in the field. More insight is available for the weaker notion of asymptotic stability of the filter, i.e., $\lim_{t\to\infty}d(\mathcal T_{t,s}\mu,\mathcal T_{t,s}\nu)=0$, which holds when the system is  ``detectable''
	in a sense that is suitable for nonlinear stochastic dynamical systems~\cite{van2010nonlinear,chigansky2009intrinsic,van2009observability,kim2022duality}. This characterization of systems with asymptotic filter stability 
	is in agreement with the existing results for the stability of the Kalman filter, which holds when the linear system is detectable in the classical sense~\cite{ocone1996}.  A complete survey of existing  filter stability results can be found in~\cite{crisan2011oxford}. 
\end{remark}

\begin{remark}[Error analysis of the finite-$N$ system] 
The derivation of an error bound for  the particle system is challenging because the particles become correlated constituting an interactive particle system, requiring application of tools from  propagation of chaos~\cite{sznitman1991}). The error analysis of the interacting particle system is the subject of ongoing work. However, it is possible to provide an error bound for the particle system equipped with a resampling  stage so that particles become independent of each other. The error-bound is
similar to the mean-field analysis presented in the previous
proposition, with an additional error term $\frac{1}{\sqrt{N}}$ due to the sampling. See \cite[Prop. 3]{al2023optimal} for details.
\end{remark}

\subsection{Literature survey and comparison}

Quantifying uncertainty and effectively assimilating noisy sensory data is the subject of nonlinear filtering and is crucial for the reliable and safe operation of control systems. 
Classical nonlinear filtering algorithms, such as Kalman filter  with its nonlinear extensions~\cite{kalman1960new,kalman-bucy,bar2004estimation}, and particle filters (PF)~\cite{gordon1993novel,arulampalam2002tutorial,doucet09} are subject to fundamental limitations that prohibits their application to modern high-dimensional problems with strong nonlinear effects: Kalman filters are sensitive to initial conditions and fail to represent multi-modal distributions~\cite{gordon2004,budhiraja2007survey}; PF suffer from the  particle degeneracy phenomenon which becomes severe in high-dimensional problems, an issue known as the curse of dimensionality~\cite{bickel2008sharp,rebeschini2015can,beskos2014error,bengtsson08}.

These issues motivated recent efforts in the nonlinear filtering literature to develop numerical algorithms
based on a controlled system of interacting particles to approximate the posterior
distribution~\cite{Tao_TAC,yang2016,crisan10,reich11,reich2015probabilistic,bergemann2012ensemble,daum10,daum2017generalized}. A prominent idea is to view the problem of transforming samples from the prior to
the posterior from the lens of optimal transportation theory \cite{reich2013nonparametric,reich2019data,taghvaei2020optimal,reich13,AmirACC2016,taghvaei2021optimal}, which has
also become popular in the Bayesian inference literature~\cite{el2012bayesian,marzouk2016introduction,
	mesa2019distributed,heng2015gibbs,kovachki2020conditional,siahkoohi2021preconditioned}. See ~\cite{spantini2022coupling} and \cite{taghvaei2023survey} for a recent survey of these topics. 
{
	Broadly speaking, the aim of the above methods is to find a  {transport map} (be it stochastic or deterministic)
	that transforms the prior distribution to the posterior distribution while minimizing a certain cost.

\subsubparagraph{Comparison of the OT approach with other coupling-based methods.} The particle flow method~\cite{daum2012particle,de2015stochastic} and feedback particle filter (FPF)~\cite{Tao_TAC,yang2016} involve either  an ordinary differential equation or stochastic differential equation that updates the locations of the particles so that the probability density of the particles follows a given PDE. The particles' equation involves an unknown vector-field that needs to be approximated by solving a certain partial differential equation (PDE). 
The main challenge in this type of algorithms is to approximate the aforementioned vector-field at each time-step. The time discretization
for this type of equation often becomes unstable, especially for multi-modal posteriors or degenerate likelihoods. Our OT approach can be viewed as an exact time-discretization of the FPF algorithm, as shown in Section~\ref{sec:FPF-OT}, which resolves the time-discretization issues discussed above. 

The ensemble transform particle filter~\cite{reich11}
involves solving a linear program for the discrete OT problem from a uniform prior distribution to the weighted posterior distribution for the particular value of the observation.   Solving the linear program becomes challenging as the number of particles $N$ increases. Moreover, approximating the marginal with a weighted empirical distribution suffers from the same fundamental issue that importance sampling particle filter suffers from.  

The coupling method proposed in~\cite{spantini2022coupling} is the closest method to our approach. It is also likelihood free and amenable to the neural net parameterizations.  The main difference is in the form of the transport map. While in this paper we aim at finding the OT map from prior to the posterior, the approach in~\cite{spantini2022coupling} aims at finding the Knothe–Rosenblatt rearrangement. Thus our approach is more closely related to the semi-dual solutions to the OT problem and can utilize the existing theoretical results and computational methodologies.

\section{Numerics}\label{sec:numerics}
We illustrate the performance of the OT filter, in comparison with the EnKF and SIR PF, for two nonlinear filtering examples.
\subsubsection{A toy example with bimodal posterior}
We consider a dynamic version of the static example in Sec.~\ref{sec:Static_Example} according to the following model:
\begin{subequations}\label{eq:model-example}
	\begin{align}
		X_{t} &= (1-\alpha) X_{t-1} + 2\lambda V_t,\quad X_0 \sim \mathcal{N}(0,I_n)\\
		Y_t &= X_t\odot X_t + \lambda W_t,
	\end{align}
\end{subequations}
where $\{V_t,W_t\}_{t=1}^\infty$ are i.i.d sequences of standard Gaussian random variables, $\alpha=0.1$ and $\lambda=\sqrt{0.1}$. The choice of $Y_t$ 
will once again lead to a bimodal posterior $\pi_t$ at every time step.

The numerical results are depicted in Fig.~\ref{fig:dynamic_example_states}: Panel (a) shows the trajectory of the particles for the three algorithms, along with the true state $X_t$ denoted with a dashed black line. The OT approach produces a bimodal distribution of particles, while the EnKF gives a Gaussian approximation and the SIR approach exhibits the weight collapse and misses a mode for the time duration $t\in[1,2] \cup [3,4.5]$. Panel (b) presents a quantitative error analysis comparing the maximum-mean-discrepancy (MMD) between the particle distribution of each algorithm  and the exact posterior.
Since the exact posterior is not explicitly available it is approximated by simulating the SIR algorithm with $N=10^5$ particles.  This quantitative result affirms the qualitative observations of panel (a) that the OT posterior better captures the true posterior in time.

We also performed a numerical experiment to study the effect of the dimension $n$ and the number of particles $N$ on the performance of the three algorithms. The results are depicted in panels (c) and (d), respectively. It is observed that both EnKF and OT scale better with dimension compared to SIR. However, as the number of particles increases, the EnKF error remains constant, due to its Gaussian bias, while the approximation error for SIR and OT decreases. 

\begin{figure*}[t]
	\centering
	\begin{subfigure}{0.24\textwidth}
		\centering
		\includegraphics[width=1.0\textwidth,trim={30 0 70 60},clip]{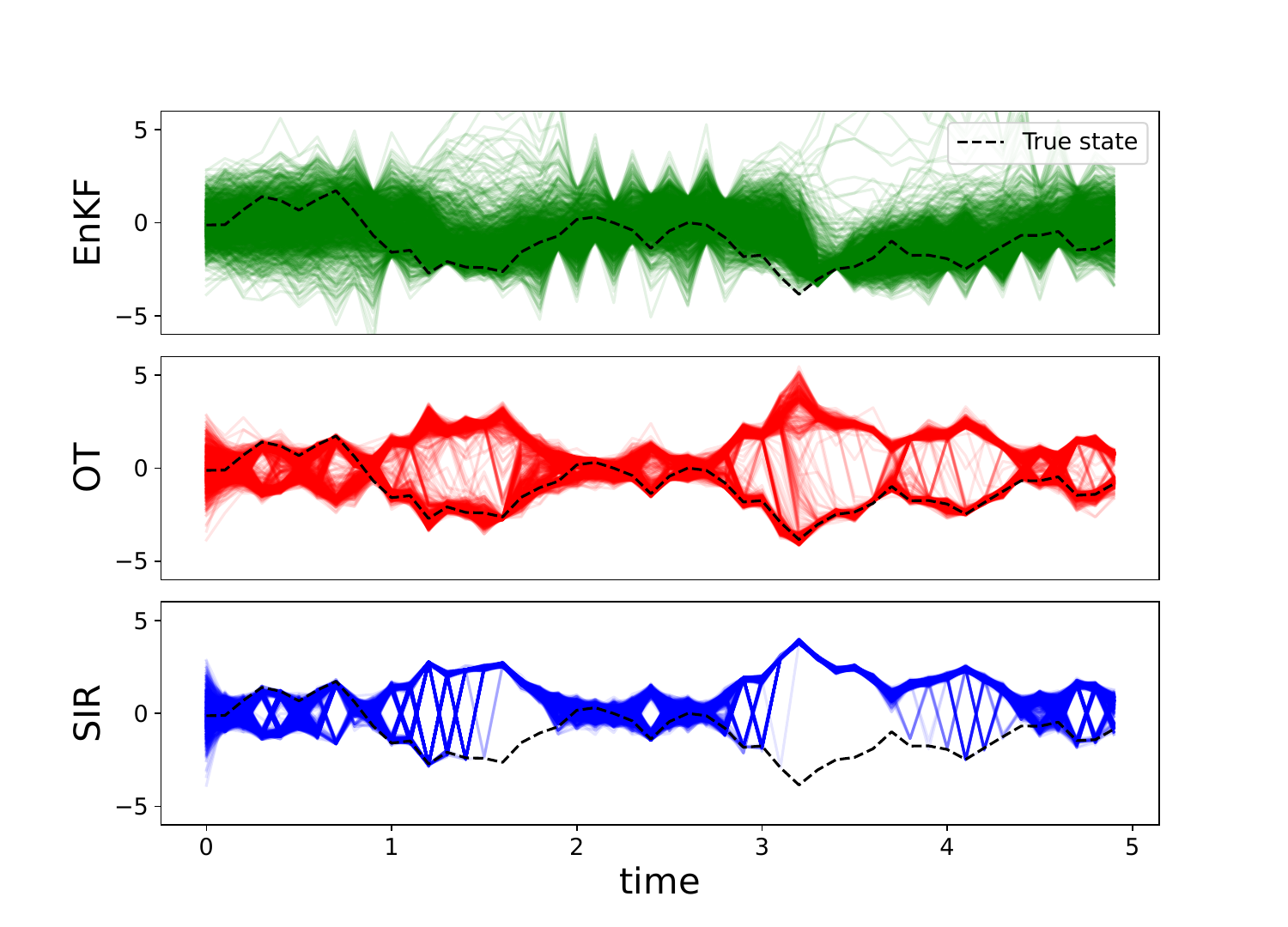}
		\caption{Particles trajectory.}
	\end{subfigure}
	\hfill
	\begin{subfigure}{0.24\textwidth}
		\centering
		\includegraphics[width=1.0\textwidth,trim={30 0 70 60},clip]{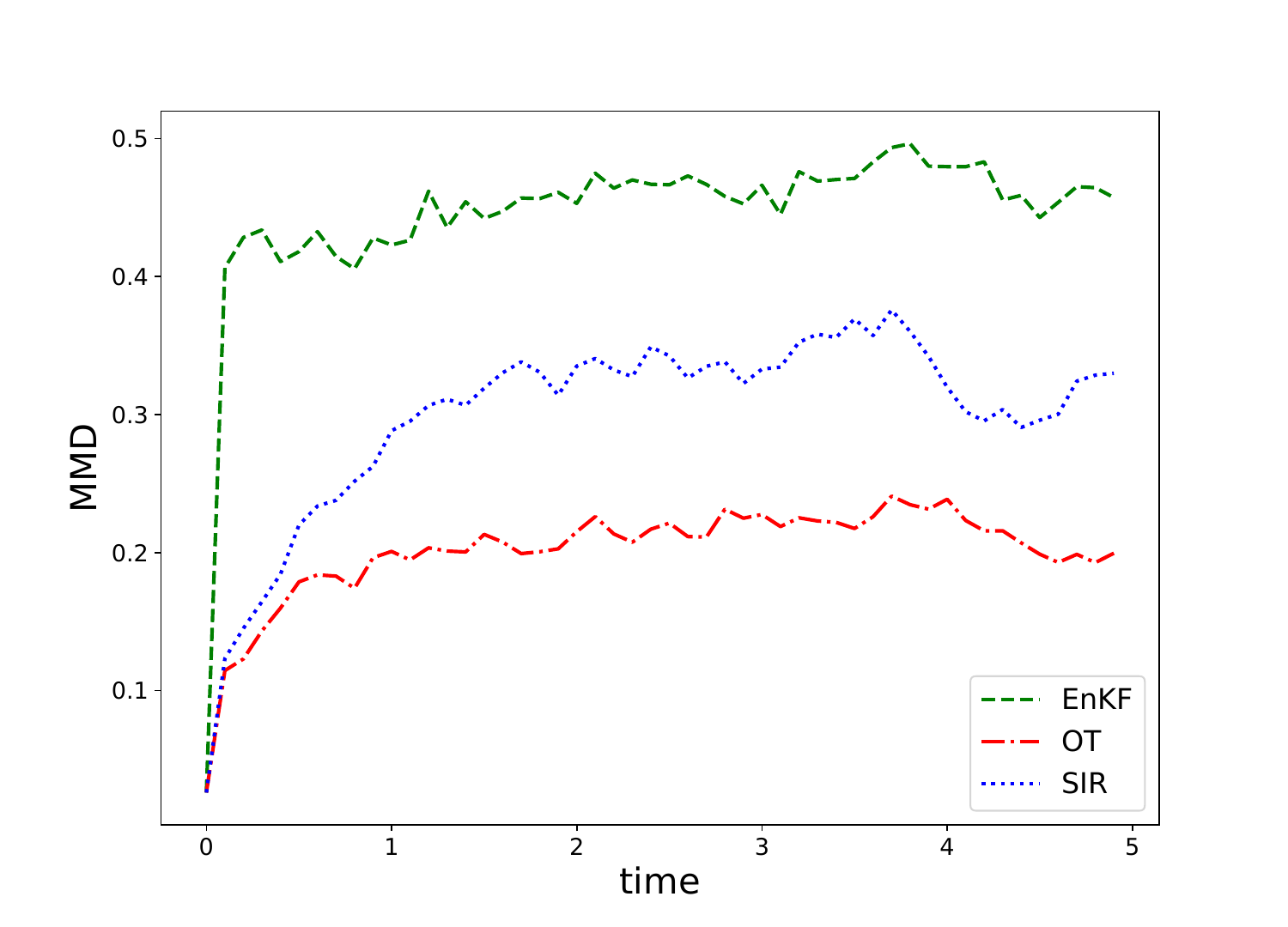} 
		\caption{MMD vs time.}
	\end{subfigure}
	\hfill
	\begin{subfigure}{0.24\textwidth}
		\centering
		\includegraphics[width=1.0\textwidth,trim={30 0 70 60},clip]{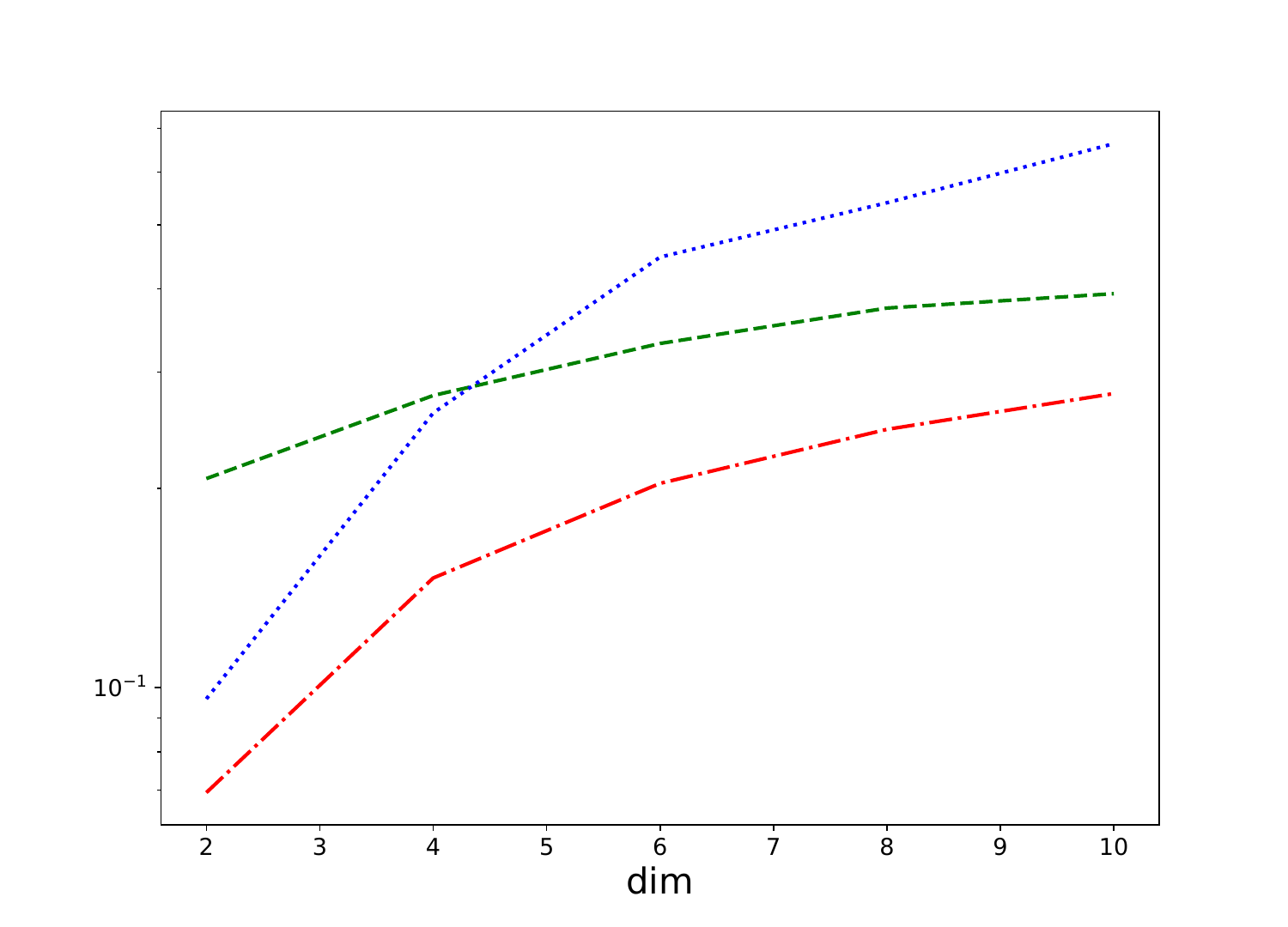}
		\caption{MMD vs the dimension.}
	\end{subfigure}
	\hfill
	\begin{subfigure}{0.24\textwidth}
		\centering
		\includegraphics[width=1.0\textwidth,trim={30 0 70 60},clip]{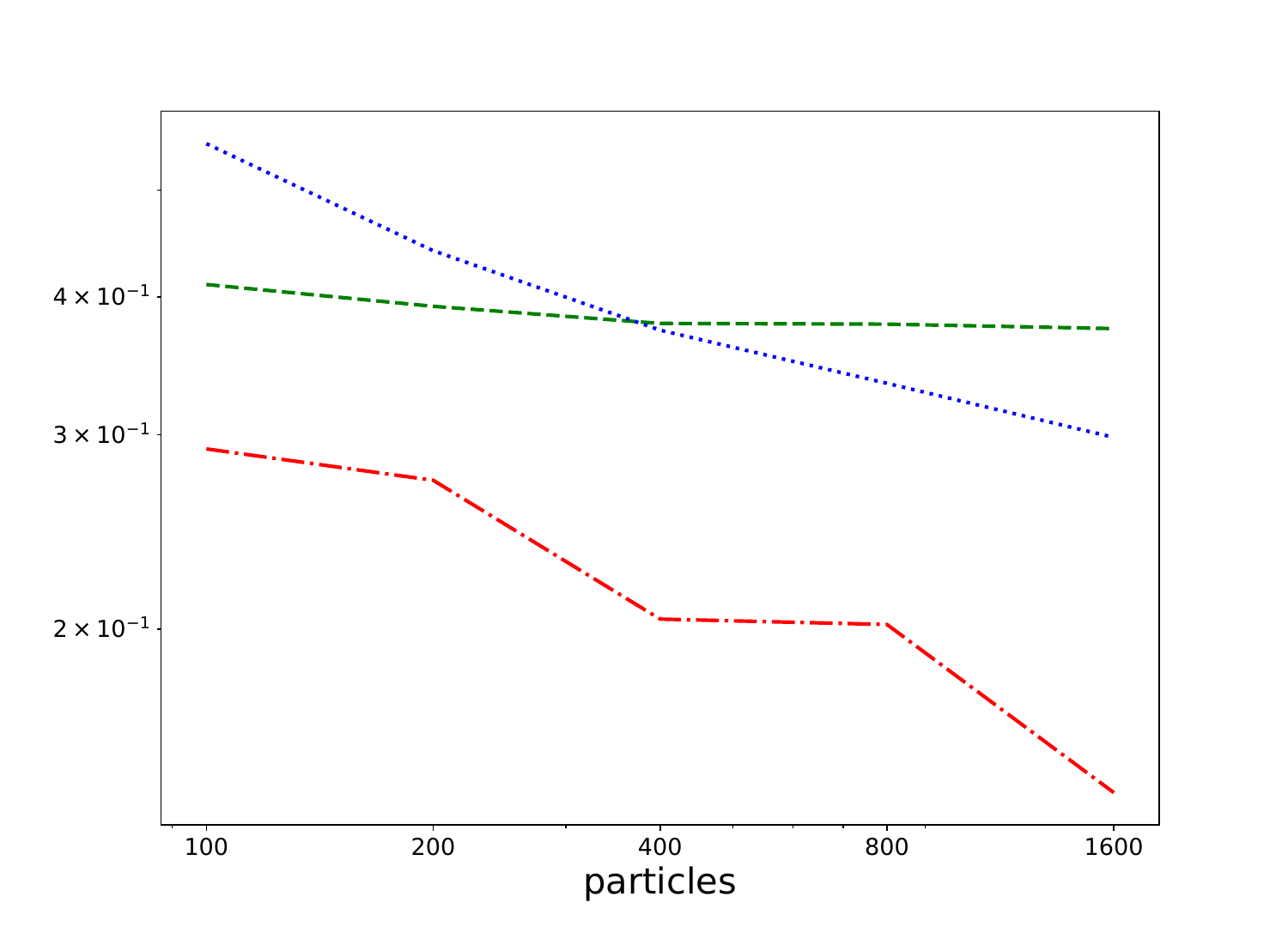} 
		\caption{MMD vs $\#$ of particles.}
	\end{subfigure}
	
	\caption{Numerical results for the dynamic example~\ref{eq:model-example}. The left panel shows the trajectory of the particles $\{X^1_t,\ldots,X^N_t\}$ along with the trajectory of the true state $X_t$ for EnKF, OT, and SIR algorithms, respectively. The second panel shows the MMD distance with respect to the exact conditional distribution. The last two panels show MMD variation with dimension and the number of particles.}
	\label{fig:dynamic_example_states}
\end{figure*}
\subsubsection{Lorentz-63}
\begin{figure}[t]
	\centering
	\includegraphics[width=0.43\textwidth,trim={75 0 100 20},clip]{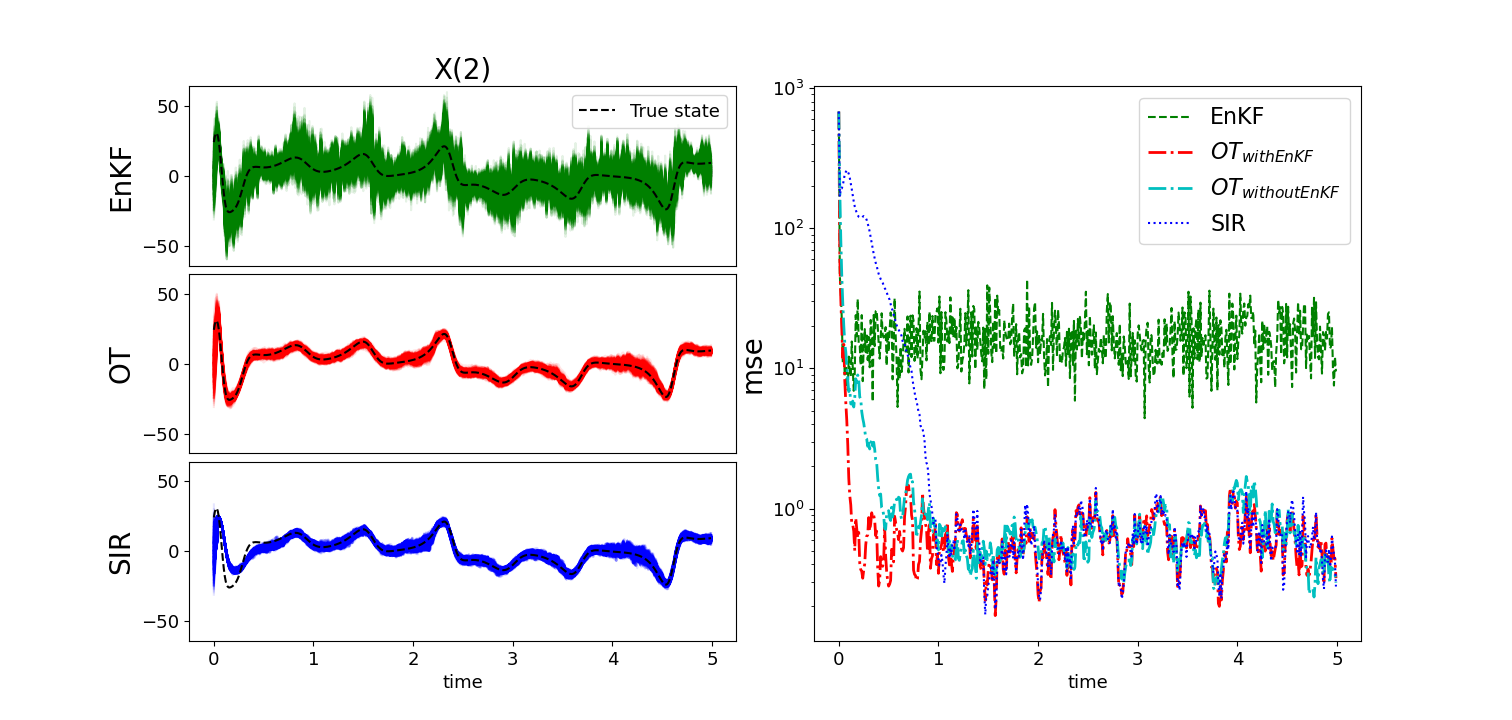}
	\caption{Numerical results for the Lorenz 63 example. The left panel shows the trajectory of the unobserved component of the true state and the particles. The right panel shows the MSE comparison.}
	\label{fig:state2_and_mse_L63}
\end{figure}
We present numerical results on the three-dimensional  Lorenz 63 model which often serves as a benchmark for nonlinear filtering algorithms. The model details appear~\cite[Appendix C.4]{al2023high}. The
state $X_t$ is $3$-dimensional while the observation $Y_t$ is $2$-dimensional and consists of noisy measurements of the first and third components of the state.

The numerical results are presented in Fig.~\ref{fig:state2_and_mse_L63}. The left panel shows the trajectory of the second component of the true state and the particles.  The OT and EnKF are quicker in converging to the true state, with EnKF admitting larger variance. The right panel shows the mean-squared-error (MSE) in estimating the state confirming the qualitative observations. We present the MSE result for two variations of the OT method: either the EnKF layer in the architecture of Fig.~\ref{tikz:static_struc} is implemented or not.  The results show that the addition of the EnKF layer helps with the performance of the filter, while computationally, we observed more numerical stability when the EnKF layer is removed.

\section{Conclusions}\label{sec:conc}
We presented a summary of the recent development in optimal transportation methods for  nonlinear  filtering problems. Specifically, we focused on the OT formulation of the Bayes' law, which led to a simulation-based nonlinear filtering algorithm that is able to capture multi-modal posterior distributions. Additional experiments in~\cite{al2023high} highlights the scalability of the approach to high-dimensional settings involving images, while it is noted that the raw 
computational time of the OT approach is higher, and 
for nonlinear filtering examples that admit unimodal posterior, such as Lorentz-96, the EnKF provides a fast and reasonable approximation. 
A computational feature of the OT method is that it provides the user with the flexibility to set the computational budget: without any training, OT algorithm implements EnKF; with additional budget (increasing training iterations and complexity of the neural net), the accuracy is increased, see Appendix C.1 in~\cite{al2023high}. The computational efficiency of the OT approach can be improved by fine-tuning the neural network architectures, optimizing the hyper-parameters, and including an offline training stage for the first time step, which will be used as a warm-start for training at future time steps in the online implementation. This line of research is pursued in our recent work~\cite{al2024data}, where a new data-driven nonlinear filtering algorithm was introduced aimed at ergodic state and observation dynamics. The algorithm consists of offline and online stages: The offline stage is expensive to train and learns a static conditioning transport map; The online stage is computationally cheap and uses the learned conditioning map without any further training, providing a competitive computational time compared with traditional methods during online inference.

\begin{ack}[Acknowledgments]

\end{ack}


\bibliographystyle{Harvard}
\bibliography{reference}

\appendix
\section{Calculations for proof of Prop.~\ref{Prop:EnKF}}
\begin{align*}
	\mathbb E[\overline{X}_{\mid_y} ] &=  \mathbb E [\overline X] + K(y-\mathbb E [\overline Y]) = m + K(y-Hm),\\ \text{Cov}(\overline{X}_{\mid_y})&= \text{Cov}(\overline X)  + K \text{Cov}(\overline Y) K\tp - \text{Cov}(\overline X,\overline Y)K\tp - K \text{Cov}(\overline X,\overline Y)= \Sigma - \Sigma H\tp (H\Sigma H\tp + R)^{-1} H \Sigma 
\end{align*}
\section{Proof of Prop.~\ref{prop:EnKF-error}}
	For any function $g\in \mathcal G$: 
\begin{align*}
	\int g(x) P_{X|Y}^{\text{EnKF}}(x|y)  \ud x &= \frac{1}{N}\sum_{i=1}^N g(X^i_0 +\hat K(y-Y^i_0))\\
	\int g(x) P_{X|Y}(x|y)  \ud x &= \mathbb E[g(X_0 +K(y-Y_0))]
\end{align*}
As a result, 
\begin{align*}
	\mathbb E&[|\int g(x) P_{X|Y}^{\text{EnKF}}(x|Y)  \ud x  -  	\int g(x) P_{X|Y}(x|Y)  \ud x |^2]^{\frac{1}{2}} =\mathbb E[| \frac{1}{N}\sum_{i=1}^N g(X^i_0 +\hat K(Y-Y^i_0)) - \mathbb Eg(X_0 + K(Y-Y_0)) |^2]^{\frac{1}{2}} \\
	&\leq \mathbb E[| \frac{1}{N}\sum_{i=1}^N g(X^i_0 +K(Y-Y^i_0)) - \mathbb Eg(X_0 + K(Y-Y_0)) |^2]^{\frac{1}{2}} + \mathbb E[| \frac{1}{N}\sum_{i=1}^N g(X^i_0 +\hat K(Y-Y^i_0)) -g(X^i_0 + K(Y-Y^i_0)) |^2]^{\frac{1}{2}}\\
	&\leq \frac{\text{Var}(g(X_0 + K(Y-Y_0)))}{\sqrt{N}} +  \mathbb E[\|(\hat K - K)(Y-Y_0)\|^2]^{\frac{1}{2}}
\end{align*}
where, in order to derive the first term, we used the fact that  $(X^i_0,Y^i_0)$ are i.i.d. samples of $P_{X,Y}$, and, in order to derive the second term, we used the fact that the function $g \in \mathcal G$ is Lipschitz with constant $1$. Finally, the fact that $g$ is uniformly bounded by $1$ and Cauchy-Schwartz inequality concludes the bound~\eqref{eq:EnKF-bound}. 
\section{Proof of Prop.~\ref{prop:FPF}}
	The proof follows from 
the expansion of the three terms in the objective function~\eqref{eq:new_loss}, up to second-order in $\Delta t$: 
\begin{align*}
	\mathbb E[f(X,\Delta Y)] &=  \mathbb E[\phi(X)\Delta Y] +\mathbb E[\psi(X)]\Delta t\\
	\mathbb E[f(T(\overline X,\Delta Y),\Delta Y)] &= \mathbb E[\phi(T(\overline X,\Delta Y))\Delta Y] + \mathbb E[\psi(T(\overline X,\Delta Y))]\Delta t \\&=  \mathbb E[\phi(\overline X + \k(\overline X) \Delta Y + u(\overline X) \Delta t)\Delta Y] + \mathbb E[\psi(\overline X + \k(\overline X) \Delta Y + u(\overline X) \Delta t)]\Delta t \\
	&= \mathbb E[\phi(\overline X)\Delta Y] + \mathbb E[\nabla \phi(\overline X)^\top K(\overline X) (\Delta Y)^2] +  \mathbb E[\nabla \phi(\overline X)^\top u(\overline X) \Delta t\Delta Y] + \frac{1}{2} \mathbb E[\k(\overline X)^\top\nabla^2 \phi(\overline X) \k(\overline X) (\Delta Y)^3]  \\&+ \mathbb E[\psi(\overline X)\Delta t] +  \mathbb E[\nabla \psi(\overline X)^\top \k(\overline X) \Delta t\Delta Y] +\mathbb E[\nabla \psi(\overline X)^\top u(\overline X) (\Delta t)^2] + \frac{1}{2} \mathbb E[\k(\overline X)^\top\nabla^2 \psi(\overline X) \k(\overline X) (\Delta Y)^2\Delta t] + O(\Delta t^3)\\
	\frac{1}{2}\mathbb  E[\|T(\overline X,\Delta Y)-\overline X\|^2]&= \frac{1}{2}\mathbb  E[\|\k(\overline X)\|^2 (\Delta Y)^2] + \frac{1}{2}\mathbb  E[\|u(\overline X)\|^2 (\Delta t)^2] +  E[\k(\overline X)^\top u(\overline X) \Delta t\Delta Y],
\end{align*}
and using the relationship $\Delta Y = h(X) \Delta t+ \Delta W$. 
\end{document}